\documentclass[aps,prx,reprint,superscriptaddress,floatfix]{revtex4-2}

\usepackage{amsmath,amssymb,amsfonts}
\usepackage{amsthm}
\usepackage{graphicx}
\usepackage{booktabs}
\usepackage{xcolor}
\usepackage[hidelinks]{hyperref}
\usepackage{braket}
\usepackage{bbm}
\usepackage{multirow}
\usepackage{placeins}

\renewcommand{\ket}[1]{|#1\rangle}

\newcommand{\op}[1]{\hat{#1}}
\newcommand{\Z}{\mathbb{Z}}

\newcommand{\nred}{N_{\mathrm{red}}}
\newcommand{\nso}{N_{\mathrm{so}}}
\newcommand{\GFtwo}{\mathbb{F}_2}
\definecolor{spinparity}{HTML}{1B8A7A}
\definecolor{halftranslation}{HTML}{D52E43}
\definecolor{pointgroup}{HTML}{7167A8}
\newcommand{\legendsquare}[1]{\textcolor{#1}{\rule{1.4ex}{1.4ex}}}
\newtheorem{proposition}{Proposition}

\begin{document}

\title{Periodic Symmetry-Adapted Encoding: Qubit Reduction in Crystalline Electronic Structure}

\author{Dario Picozzi}
\email{picozzi.dario@gmail.com}
\affiliation{Department of Physics and Astronomy, University College London (UCL),
Gower Street, London, WC1E 6BT, United Kingdom}
\affiliation{London Centre for Nanotechnology, 19 Gordon St, London, WC1H 0AH,
United Kingdom}

\begin{abstract}
We introduce periodic symmetry-adapted encoding (SAE) for
qubit-efficient simulations of crystalline materials, extending the molecular
SAE framework developed in earlier work to periodic systems.  The method constructs a
$\Gamma$-point supercell active-space Hamiltonian from $k$-point Hartree--Fock
data and identifies independent Boolean symmetry generators arising from spin
parity, crystal translations, and point-group operations.  Each generator
removes one qubit while preserving the Hamiltonian spectrum in the target
symmetry sector.  The inclusion of crystal translations increases the maximum
number of independent generators, and hence the number of removable qubits,
from five in molecular SAE to eight in periodic SAE.  Across ten crystals
spanning cubic, hexagonal, trigonal, and tetragonal structures, the method
removes four to eight qubits.  In CsCl, all eight generators are realised,
reducing CAS(6,7) from 14 qubits to 6.  Complete-spectrum comparisons verify
target-sector isospectrality, while noiseless UCCSD-VQE reaches the
fixed-particle FCI energies throughout the suite.  The resulting circuits
reduce unoptimised logical CNOT counts by up to 99.7\% relative to the full
Jordan--Wigner encoding.
The method is implemented in the open-source \texttt{QuantumSymmetry} package.
\end{abstract}

\maketitle

% ============================================================
\section{Introduction}
\label{sec:intro}
% ============================================================

Simulating the electronic structure of crystalline materials is a major target
application for quantum computers~\cite{Babbush2018,Su2021,Bauer2020,
YoshiokaPeriodic2022,Rubin2023,Ma2020,Liu2022}.
Periodic systems pose distinct challenges already familiar from solid-state
electronic structure~\cite{AshcroftMermin1976,Martin2004}: the Hamiltonian is
naturally expressed in a Bloch basis at multiple $k$-points, the active space is
selected from crystal bands with their $k$-point provenance, and its spatial
symmetries form the crystal space group~\cite{BradleyCracknell2010}.
Existing quantum simulation proposals for periodic systems have largely focused on
plane-wave bases~\cite{Babbush2018,Su2021}, second-quantised Hubbard models, or
direct $k$-point encodings~\cite{YoshiokaPeriodic2022}.
Gaussian orbital and active-space routes to correlated solids provide a
chemistry-like starting point for quantum algorithms
~\cite{McClain2017,LiuK2G2020,Ma2020}.

The symmetry-adapted encoding (SAE) framework~\cite{Picozzi2023,PicozziTennysonCASBK} addresses
the molecular version of this problem.
SAE identifies physical $\mathbb{Z}_2$ spin-parity and point-group symmetries of
a second-quantised Hamiltonian, encodes them as a system of Boolean
linear equations in the spin-orbital occupancies, and removes redundant qubits via
affine Clifford transformations, projecting into the physical symmetry sector.
Combined with the complete active space (CAS) approximation, SAE-CAS reduces
molecular active-space qubit counts without changing the spectrum in the
selected symmetry sector.
General Pauli-symmetry tapering can recover an equivalent Hamiltonian block and,
when its Clifford transformation is retained, can also transform other mapped
operators~\cite{Bravyi2017}.  Periodic SAE instead derives its generators
directly from spin and crystal-orbital characters, without a separate
Pauli-commutant search.  A single affine map applies to the Hamiltonian,
fermionic observables, and excitation generators while retaining selected
spin-orbital occupations as readable reduced coordinates.  This physics-labelled
construction also differs from point-group reductions targeted specifically at
finite molecules~\cite{Setia2020}.  In the periodic setting the spatial
characters come from finite crystal translations and space-group operations
after their actions have been restricted to the selected active space.
Molecular SAE can realise at most five independent Boolean generators, whereas
the finite translation group can supply up to three additional independent
Boolean characters in three dimensions.

Previous work developed momentum-conserving periodic UCC constructions and
demonstrated symmetry tapering for a periodic hydrogen-chain
model~\cite{Manrique2021}.  Here we extend the symmetry-adapted encoding
framework to crystalline active spaces, jointly converting spin,
crystal-translation, and point symmetries into reductions in qubit count
and ansatz complexity.
Recent periodic quantum-simulation workflows reduce variational and circuit
costs through adaptive operator pools and variance extrapolation
~\cite{LiFanLiu2025}, or through Wannier localisation, hybrid fermion-to-qubit
encoding, swap-network compilation, and measurement optimisation
~\cite{Clinton2024}.

The present work makes two distinct contributions.
First, before any fermion-to-qubit mapping, we construct the fermionic
active-space Hamiltonian by transforming primitive-cell $k$-point
density-fitted quantities directly into a selected real $\Gamma$-supercell
active block.  This avoids materialising a full supercell atomic-orbital
four-index tensor while retaining the decomposition over primitive-cell
$k$ components and the symmetry representation of the active orbitals.  Second,
periodic SAE constructs spin, crystal-translation, and point-symmetry generators
directly from the retained orbital-character information and uses them to reduce
the fermionic Hamiltonian, without a separate Pauli-commutant search.  The
resulting affine reduction projects the selected symmetry sector and provides a
common within-sector mapping for the Hamiltonian, fermionic observables,
reference determinant, and excitation operators.  Both contributions are
compatible with the adaptive-ansatz and circuit-compilation strategies above.

We benchmark the method on ten insulating or semiconducting crystals spanning
cubic, hexagonal, trigonal, and tetragonal structures, all represented as
$(2,2,2)$ folded supercells.  Their degeneracy-complete active spaces contain
2--6 active electrons in 4--8 active orbitals, corresponding to 8--16
Jordan--Wigner (JW) spin-orbital qubits.  Periodic SAE removes 4--8 qubits across
the set; CsCl gives the largest
reduction, from 14 to 6 qubits.  For every system we run exact diagonalisation
and a variational quantum eigensolver with a unitary coupled-cluster
singles-and-doubles ansatz (UCCSD-VQE); full JW-UCCSD-VQE is run for nine systems.

% ============================================================
\section{Method}
\label{sec:method}
% ============================================================

The calculation has two stages.  First, a primitive-cell $k$-point restricted
Hartree--Fock (KRHF) calculation is folded to a commensurate $\Gamma$-point
supercell and a complete frontier active manifold is specified.  Finite
translations and space-group operations are represented on this manifold;
a maximal independent set whose restricted actions are commuting real
involutions then defines a simultaneous-character orbital
basis.  The primitive-cell density-fitted quantities are contracted directly
into that symmetry-adapted active block, and numerical symmetry restoration
gives $\op{H}_{\mathrm{sym}}$.  Second, the orbital characters define the matrix
$A$, the reference determinant fixes the target vector $c$, and the affine
Clifford transformation restricts $\op{H}_{\mathrm{sym}}$ to that sector.

\subsection{Periodic active-space Hamiltonian}
\label{subsec:periodic_ham}

The periodic electronic Hamiltonian in a Gaussian-type orbital (GTO) basis is
naturally expressed in terms of Bloch orbitals sampled on an $N_k$-point mesh.
A KRHF calculation yields Bloch orbitals $\phi_{n\mathbf k}$
with coefficients $C^{\mathbf k}_{\mu n}$ in the $k$-adapted atomic-orbital
(AO) basis.  For a complete diagonal mesh $(N_0,N_1,N_2)$,
$N_k=N_0N_1N_2$, and the commensurate supercell contains $N_k$
primitive-cell images.  The same finite one-electron space can be represented
at the $\Gamma$ point of that supercell~\cite{LiuK2G2020,PySCF2020,PySCF2018}.
Let $\mathbf R$ label the primitive-cell translations in the supercell and let
$m$ label the full set of supercell orbitals before active-space selection.  The
change of orbital basis can be written
\begin{equation}
\begin{aligned}
  C^{\mathrm{SC}}_{\mathbf R\mu,m}
  &=\frac{1}{\sqrt{N_k}}\sum_{\mathbf k,n}
    e^{i\mathbf k\cdot\mathbf R}C^{\mathbf k}_{\mu n}W_{\mathbf k n,m},\\
  W^\dagger W&=I.
\end{aligned}
\label{eq:k2gamma-fold}
\end{equation}
The phase factor is the normalised discrete Fourier transformation between the
complete $k$-point mesh and the supercell translations, and $W$ is a unitary
change of basis within the same orbital space.  For the spin-free,
time-reversal-symmetric calculations considered here, $W$ can be chosen to
produce real supercell orbitals; subsequent symmetry adaptation applies real
orthogonal rotations within near-degenerate blocks.

No orbitals are discarded by these transformations.  The final real supercell
orbitals are orthonormal and span the same finite one-electron space as the
original Bloch orbitals, although an individual final supercell orbital need
not have a definite primitive-cell $\mathbf k$ label.  Active-space selection and crystal-symmetry
analysis are performed in this real supercell-orbital basis.

For each benchmark we choose a frontier active space containing the highest
$n_{\mathrm{occ}}$ occupied and lowest $n_{\mathrm{unocc}}$ unoccupied supercell
orbitals.  This gives CAS$(n_e,n_{\mathrm{act}})$, with
$n_e=2n_{\mathrm{occ}}$ active electrons and
$n_{\mathrm{act}}=n_{\mathrm{occ}}+n_{\mathrm{unocc}}$ active spatial orbitals.
The contribution from the inactive doubly occupied orbitals, referred to below
as the frozen-core space, is contracted exactly within the chosen active-space
approximation, yielding an effective active-space one-body operator (including
the core mean field) and the two-electron repulsion integrals restricted to the
active block~\cite{Helgaker2000}.
The active-space second-quantised Hamiltonian is then
\begin{equation}
  \op{H} = E_{\mathrm{core}} + \sum_{pq} h_{pq}^{\mathrm{eff}}\, a_p^\dagger a_q
  + \frac{1}{2}\sum_{pqrs} g_{pqrs}\, a_p^\dagger a_q^\dagger a_s a_r,
  \label{eq:ham}
\end{equation}
where $p,q,r,s$ index the $\nso = 2n_{\mathrm{act}}$ active spin-orbitals in
interleaved spin order, $E_{\mathrm{core}}$ is the frozen-core energy (including
nuclear repulsion), $h_{pq}^{\mathrm{eff}}$ is the core-dressed one-electron
integral, and $g_{pqrs}$ are the two-electron repulsion integrals.
Before the numerical symmetry restoration described below, the finite-precision
operator obtained at this stage is denoted $\op{H}_{\mathrm{num}}$.

The calculations use the density-fitting machinery of
\textsc{PySCF}~\cite{PySCF2020,PySCF2018}: primitive-cell Gaussian density-fitting
(GDF) quantities~\cite{Sun2017} are contracted directly into the supercell
active block, including the frozen-core and finite-supercell Ewald exchange
contributions~\cite{Ewald1921,Fraser1996}.  Appendix~\ref{app:folded-integrals}
gives the contractions and Ewald sign convention, while
Appendix~\ref{app:reduction-validation} reports an independent check of the
GDF assembly.

This direct active-block construction retains the primitive-cell $k$
decomposition and orbital information needed for the symmetry analysis below.

\subsection{Periodic symmetry-adapted encoding}
\label{subsec:periodic-sae}

Periodic SAE represents physical symmetries of the crystalline active space as
Boolean constraints on spin-orbital occupations.  The crystal operations are
represented on the folded active orbitals, a simultaneous-character basis is
constructed for the retained commuting involutions, and the physical symmetry
sector is fixed by an affine Clifford transformation.
We first formulate the reduction for a given character matrix $A$, and then
describe how the periodic generators and simultaneous-character basis used to
construct $A$ are obtained.

\subsubsection{Jordan--Wigner encoding and spin-orbital ordering}
\label{subsec:jw}

The active-space Hamiltonian~\eqref{eq:ham} is mapped to a qubit operator via the
Jordan--Wigner (JW) transformation~\cite{Jordan1928}:
\begin{equation}
  a_p^\dagger \mapsto
  \frac{X_p - iY_p}{2} \prod_{j<p} Z_j,
  \qquad
  a_p \mapsto
  \frac{X_p + iY_p}{2} \prod_{j<p} Z_j,
\end{equation}
where $X_p$, $Y_p$, $Z_p$ are Pauli operators acting on qubit $p$.
Spin-orbitals are ordered in interleaved spin order: qubits $0,2,4,\ldots$ carry
spin-up electrons and qubits $1,3,5,\ldots$ carry spin-down electrons.
This ordering makes the spin-parity symmetries diagonal in the $Z$-basis and
facilitates identification of the Boolean generators described below.

The JW Hamiltonian acts on $\nso$ qubits.  For example, the CAS(6,8) MgF$_2$
active space has $\nso=16$ and a $2^{16}=65536$ dimensional full Hilbert space,
whereas its physical $(N_\uparrow,N_\downarrow)=(3,3)$ sector has dimension
$\binom{8}{3}^2=3136$.

\subsubsection{Boolean constraints and affine reduction}
\label{subsec:generators}

The SAE framework~\cite{Picozzi2023,PicozziTennysonCASBK} identifies $\mathbb{Z}_2$
symmetries of $\op{H}$ of the form
\begin{equation}
  \tau_j = \bigotimes_{p=0}^{\nso-1} Z_p^{A_{jp}},
  \qquad A_{jp} \in \{0,1\},
  \label{eq:generator}
\end{equation}
where $A_{jp}=1$ if and only if symmetry generator $j$ has character $-1$ on
spin-orbital $p$.
These mutually commuting operators satisfy $[\tau_j, \op{H}] = 0$ and
$\tau_j^2 = \mathbbm{1}$, so $\op{H}$ and the $\tau_j$ admit a common
eigenbasis in which every $\tau_j$ has eigenvalue $\pm1$.
For $n_g$ independent generators, the generated Boolean symmetry group is
isomorphic to $\Z_2^{n_g}$.
Fixing the physical eigenvalue of each generator removes one qubit per generator
via an affine Clifford transformation, projecting the Hamiltonian into the
target symmetry sector without changing its spectrum in that sector.

The rows of $A$ are constructed directly from the orbital characters of
physical symmetries, without inspecting the Pauli terms of $\op{H}$.
For an active-space determinant with Jordan--Wigner occupation vector
$a\in\GFtwo^{\nso}$, the generator eigenvalue is
\begin{equation}
  \tau_j\ket{a}=(-1)^{(Aa)_j}\ket{a}.
  \label{eq:character-constraint}
\end{equation}
Choosing a target sector $c\in\GFtwo^{n_g}$ therefore imposes the Boolean
constraints
\begin{equation}
  A a = c .
  \label{eq:sector-constraints}
\end{equation}
For the benchmarks below, $a_{\mathrm{ref}}$ is the occupation vector of the
folded KRHF reference determinant and the target is selected operationally as
\begin{equation}
  c=Aa_{\mathrm{ref}}\pmod 2,
  \qquad s_j=(-1)^{c_j}.
  \label{eq:reference-sector}
\end{equation}
Any row operations used to obtain independent pivot constraints are applied to
the augmented matrix $[A\,|\,c]$, so they preserve the same physical sector.
These equations are the periodic analogue of the point-group character
constraints in molecular SAE~\cite{Picozzi2023,PicozziTennysonCASBK}.
The number of removable qubits is the Boolean rank
$n_g=\operatorname{rank}_{\GFtwo}A$.  Within the spin-parity and spatial-
symmetry construction used here, molecular SAE satisfies $n_g\leq5$, whereas
a supercell generated by a diagonal $(N_0,N_1,N_2)$ mesh satisfies
$n_g\leq5+e\leq8$, where $e$ is the
number of even folded-mesh axes.  These bounds and their saturation by the
CsCl CAS(6,7) active space are proved in Appendix~\ref{app:boolean-rank}.
After row reduction over $\GFtwo$, pivot spin-orbital occupations are redundant
and the non-pivot occupations are kept as reduced coordinates.
After reordering the occupation vector so that the $n_g$ pivots come first, let
$T$ be the identity with its first $n_g$ rows replaced by the independent rows
of $A$, and set $b=(c^{\mathsf T},0,\ldots,0)^{\mathsf T}$.  The basis change is
the affine map
\begin{equation}
  \ket{a} \longmapsto \ket{q}=\ket{T a \oplus b}.
  \label{eq:periodic-affine-map}
\end{equation}
Its Clifford tableau is the affine SAE tableau~\cite{Picozzi2023,
PicozziTennysonCASBK}; after projection of the first $n_g$ symmetry bits, the
Hamiltonian acts on $\nred=\nso-n_g$ qubits.
Because the remaining bits are the chosen non-redundant spin-orbital occupation
coordinates, every reduced computational-basis state corresponds to a unique
Slater determinant in the sector~\eqref{eq:sector-constraints}.
Fermionic observables and excitation generators are mapped through the same
Jordan--Wigner, affine-Clifford, and sector-projection steps.  Let $P_c$ denote
the projector onto Eq.~\eqref{eq:sector-constraints}.  Each monomial in a single
(double) excitation flips two (four) occupation bits.  It preserves the
eigenvalue of $\tau_j$ exactly when an even number of the flipped bits satisfy
$A_{jp}=1$; an odd number reverses it.

The anti-Hermitian generator $G_\mu$ associated with one packed spin-singlet
UCCSD amplitude sums the allowed spin assignments involving the same spatial-
orbital indices and their Hermitian conjugates.  The selected spatial
symmetries act identically on the two spin partners of each spatial orbital,
while every term preserves both spin parities.  This packing therefore
preserves Boolean character: every monomial in $G_\mu$ has the same character
under every selected symmetry, and $G_\mu$ itself has a definite character.
The screening rule is simply
\begin{equation}
  P_cG_\mu P_c=0
  \quad\text{if}\quad
  \tau_jG_\mu\tau_j=-G_\mu\ \text{for any }j.
  \label{eq:excitation-screening}
\end{equation}
Such a generator maps a target-sector state into a different symmetry sector,
so its amplitude is absent from the SAE circuit.  A generator that commutes
with every $\tau_j$ preserves the sector, and restricting its exponential to
that sector is equivalent to exponentiating its restricted generator.  The
retained UCCSD generators therefore have an exact constructive representation
on the reduced register.

\subsubsection{Crystal-symmetry generators}

For periodic systems we identify three classes of generators:

\paragraph{Spin-parity symmetries.}
For $\sigma\in\{\uparrow,\downarrow\}$, the operator
$P_\sigma=\prod_{m=0}^{n_{\mathrm{act}}-1}(-1)^{\hat n_{m\sigma}}$ measures the
parity of the number of spin-$\sigma$ electrons in the active space and is an
exact symmetry of~\eqref{eq:ham}.
These contribute two generators, present in every system.

\paragraph{Point-group symmetries.}
Point operations realised by primitive-cell space-group symmetries---including
rotations, reflections, and inversion---are represented as matrices on the
supercell active-orbital
space~\cite{Picozzi2023,PicozziTennysonCASBK}.  An operation
supplies a $\mathbb Z_2$ generator when its restriction to the active space is a
real involution and the symmetry-adaptation step gives every active orbital a
definite eigenvalue $\pm1$.  The operation need not have order two outside the
active space.
Encoding generators are obtained by lifting primitive-cell space-group
operations to the supercell AO basis; finite-supercell point-group labels are
diagnostic only.

\paragraph{Crystal translation symmetries.}
The finite Born--von K\'arm\'an translation group contains the shifts
$T_{\sum_i m_i\mathbf a_i}$ with $0\leq m_i<N_i$.  In the supercell AO basis,
each shift permutes equivalent atomic orbitals.  We enumerate these finite
translations and restrict their actions to the retained active orbitals.  A
translation supplies a $\mathbb Z_2$ generator when this restricted action
squares to $+I$ and is non-trivial; its $\pm1$ orbital signs then define the
Boolean character.  This test determines the useful translation from the
active representation rather than assuming it in advance.

The familiar half-supercell shift $T_{\mathbf A_i/2}$ is always an order-two
candidate when $N_i$ is even, but it can act as $+I$ on a particular active
space and add no constraint.  In that case a shorter translation can still
become an involution after restriction if its square acts as $+I$ on that
space.  More generally, all finite translations are tested, so coupled
directions and non-coordinate representatives are included automatically.
For a diagonal $(N_0,N_1,N_2)$ mesh, the resulting translation-derived Boolean
rank is at most the number of even mesh axes and therefore at most three.
Together with the two spin-parity generators and up to three independent
non-translation generators, these translations make an eight-generator
$\Z_2^8$ reduction possible.  The $(2,2,2)$ mesh used in the benchmarks is the
minimal three-dimensional mesh exposing all three half-supercell translations.
Figure~\ref{fig:supercell_geom} shows the full set of eight generators for the
CsCl benchmark system.

\begin{figure*}[t]
  \centering
  \includegraphics[width=0.88\textwidth]{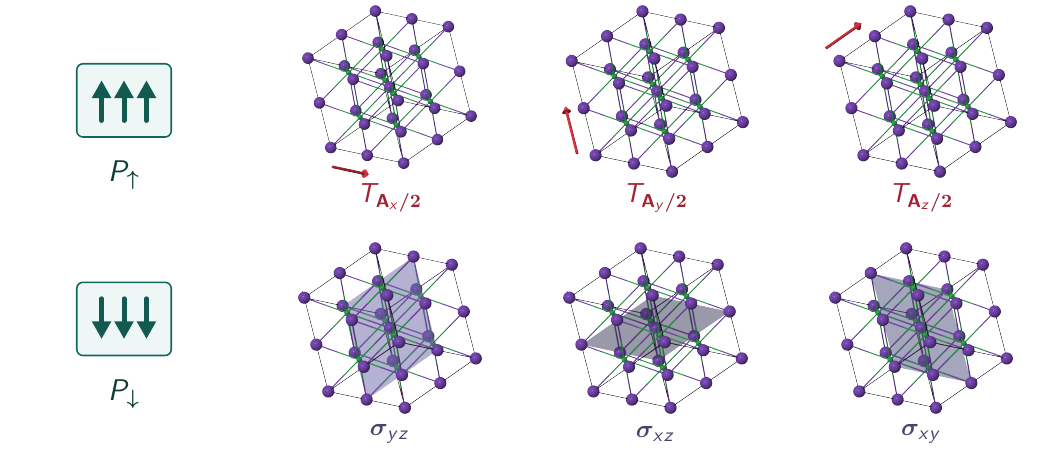}
  \caption[Caesium chloride (CsCl) symmetry generators]{%
    Eight commuting symmetry generators for the B2 caesium chloride (CsCl)
    CAS(6,7) active space in the $(2,2,2)$ simple-cubic supercell.
    The generators are the spin-up and spin-down electron-number parities
    $P_\uparrow$ and $P_\downarrow$, the three half-supercell translations
    $T_{\mathbf A_x/2}$, $T_{\mathbf A_y/2}$, and $T_{\mathbf A_z/2}$, and the
    three reflections $\sigma_{yz}$, $\sigma_{xz}$, and $\sigma_{xy}$.  The
    first column shows the spin parities; each remaining column pairs a
    translation with the reflection normal to the same axis.  Cs atoms are
    purple and Cl atoms green.  The resulting $\Z_2^8$ group reduces the
    register exactly from 14 to 6 qubits.
  }
  \label{fig:supercell_geom}
\end{figure*}

\subsubsection{Selecting a commuting generator set}

Every finite supercell translation and every translated representative of the
primitive-cell space group is lifted to the supercell atomic-orbital basis.  Let
$C_{\mathcal A}$ be the matrix of active-orbital coefficients and let $S$ be the
AO overlap matrix.
For a candidate operation $U_g$, its active action is
\begin{equation}
  M_g=C_{\mathcal A}^{\dagger}S U_g C_{\mathcal A}.
  \label{eq:active-symmetry-restriction}
\end{equation}
We first require the active, frozen-core, and active occupied subspaces to be
invariant.  We then retain the candidate precisely when the raw restricted
matrix is a real involution, $M_g^2=I$.  Operations are compatible when their
restricted matrices commute, $[M_g,M_h]=0$; they need not commute or square to
the identity on discarded orbitals.  Compatible candidates are simultaneously
diagonalised within active blocks of equal occupation and approximately equal
energy.  Finally, row reduction of their $\pm1$ characters, together with the
two spin parities, selects the set with the largest active-space Boolean rank.

The retained active-space involutions have exact eigenvalues $\pm1$; deviations
in their finite AO/block representations are numerical.  For each retained
operation, every numerical eigenvalue $\lambda_p$ is therefore replaced by its
sign, defining the exact diagonal representative
$\widetilde U_g=\operatorname{diag}[\operatorname{sgn}(\lambda_p)]$.  The sign
on active spin orbital $p$ is encoded as $A_{jp}=0$ for $+1$ and $A_{jp}=1$ for
$-1$.  Row reduction selects the independent constraints used in
Eq.~\eqref{eq:sector-constraints}.  The Boolean constraints and group average
below use these sign-restored representatives rather than the numerical block
matrices.

\subsubsection{Numerical symmetry restoration}

Finite density-fitting and integral-transformation tolerances can leave small
matrix elements that violate these otherwise exact crystal symmetries.  After the
generator set is selected, we restore the selected spatial Boolean group
$G_{\mathrm{sp}}$ by the group average, writing $U_g$ for its exact sign-restored
representatives,
\begin{equation}
  \op{H}_{\mathrm{sym}}
  = \frac{1}{|G_{\mathrm{sp}}|}\sum_{g\in G_{\mathrm{sp}}}
    \op{U}_g\op{H}_{\mathrm{num}}\op{U}_g^\dagger .
  \label{eq:symmetry-restoration}
\end{equation}
In the simultaneous $\pm1$-character orbital basis this is equivalent to setting
symmetry-forbidden one- and two-electron tensor elements to zero.  The generator
rank and qubit reduction are fixed before this numerical restoration, which
enforces the spatial symmetries identified from the input crystal and represented
on the selected active manifold.  Only the retained spatial generators enter
this average; the spin parities are exact by construction because
$\op{H}_{\mathrm{num}}$ conserves $N_\uparrow$ and $N_\downarrow$.
As proved in Appendix~\ref{app:symmetry-restoration}, the average is the
Hilbert--Schmidt orthogonal projection onto the symmetry-invariant operator
algebra and obeys
$P_c\op{H}_{\mathrm{sym}}P_c=P_c\op{H}_{\mathrm{num}}P_c$ in every spatial-
character sector.  It therefore removes only couplings between these sectors
and leaves each compressed Hamiltonian block exactly unchanged; restriction to
the fixed spin sector is unaffected by the already exact spin conservation.
The affine SAE transformation is then exactly isospectral within each selected
symmetry sector of $\op{H}_{\mathrm{sym}}$.

\subsection{Benchmark protocol and computational details}
\label{subsec:computational-details}

Appendix~\ref{app:benchmark-inputs} records the cell parameters, basis and
pseudopotential choices, and folded-supercell active-orbital windows.  Each of the
ten benchmark calculations starts from a closed-shell
\textsc{PySCF} KRHF calculation on the primitive cell with a $(2,2,2)$
$k$-point mesh and \texttt{exxdiv=ewald}.

Applicability beyond self-inverse $k$ meshes is checked end to end for a
$(4,2,2)$ CsCl mesh in Appendix~\ref{app:reduction-validation}.

After the symmetry generator set and simultaneous-character basis have been
chosen, the primitive-cell GDF quantities are folded directly into the
resulting real $\Gamma$-point supercell active block.  The default
basis/pseudopotential pair is
\texttt{gth-szv}/\texttt{gth-pade}; NaCl and CsCl use the exceptions listed in
Appendix~\ref{app:benchmark-inputs}.  The symmetry adaptation uses an
energy-block tolerance of $5\times10^{-3}$~Ha.  The root-mean-square active- and
frozen-core-subspace leakage and the normalised involution residual are each
required to be below $10^{-5}$; the active occupied-subspace leakage is below
$10^{-4}$.  During simultaneous diagonalisation, retained operations have an
inter-block Frobenius norm below $10^{-2}$ and block eigenvalues within
$10^{-2}$ of $\pm1$; pairwise active-space commutator Frobenius norms are at
most $10^{-6}$.
The active windows are separately checked for degeneracy closure at $10^{-4}$,
$10^{-3}$, and $5\times10^{-3}$~Ha.

The reported JW and SAE operators use the symmetry-restored active-space
Hamiltonian.  Exact physical references are obtained by restricting the
unreduced Hamiltonian to the fixed $(N_\uparrow,N_\downarrow)$ sector.  The ten
resource benchmarks start from the folded KRHF determinant and use spin-singlet
fermionic UCCSD generators.

Resource circuits are singlet UCCSD circuits~\cite{Romero2018} with one indexed
anti-Hermitian generator $G_\mu$ per packed UCCSD amplitude.  We form three
controlled comparisons.  Full JW uses every such generator; at the fermionic
generator level, this enumeration conserves particle number and spin projection
but does not screen by the selected spatial characters.  The symmetry-filtered
comparator JW-SF retains the full JW Hamiltonian, register, reference state, and
mapping, but includes only generators whose SAE sector image is nonzero.  SAE applies the same
filtering automatically through Eq.~\eqref{eq:excitation-screening} and
represents the identical indexed subset on the reduced register.  Thus JW
versus JW-SF isolates ansatz filtering, whereas JW-SF versus SAE
isolates the effect of reduced-register encoding.  Figure~\ref{fig:resource_summary}
shows the full-JW and SAE endpoints; the complete three-way comparison is given
in Table~\ref{tab:jw-sf-resources}.

The reported depth and CNOT metrics are unoptimised logical circuit counts after
three successive Qiskit decompositions.  Qiskit 2.4.1 uses its default
first-order, one-repetition Lie--Trotter synthesis with a chain CNOT structure;
no hardware connectivity or transpiler optimisation is assumed.  The JW and
SAE UCCSD circuits start from the target-number folded KRHF determinant and
conserve $N_\uparrow$ and $N_\downarrow$, so their ideal VQE states remain in the
fixed-particle sector.  Representative level-3 compilations for diamond, CsCl,
and $\alpha$-quartz use Qiskit 2.4.1 with transpiler seed 271828 and the native
basis \texttt{rz}, \texttt{sx}, \texttt{x}, and \texttt{cx}, under both
all-to-all and bidirectional nearest-neighbour-line connectivity.

For the ten resource benchmarks, noiseless VQE energies are evaluated from
exact statevectors~\cite{McClean2016}.  The Hamiltonians, ansatz circuits,
sequential least-squares programming (SLSQP) optimiser, and convergence settings
are common across that comparison.
Finite-difference SLSQP starts from zero
UCCSD amplitudes and uses at most 300 optimiser iterations with
\texttt{ftol}=$10^{-9}$; the reported optimisation count is the number of
objective-function evaluations.  JW-UCCSD-VQE is not run for MgF$_2$ because of
the substantially greater cost of its 16-qubit, 135-parameter statevector
optimisation.  Exact fixed-$(N_\uparrow,N_\downarrow)$ diagonalisation is run for
sector dimension at most 10{,}000 and SAE VQE for at most 11 reduced qubits; all
ten benchmarks satisfy these limits.

The Hamiltonian construction, circuit-resource calculations, and spectrum
validation use \texttt{QuantumSymmetry} v0.3.4, PySCF 2.13.0, OpenFermion
1.7.1, and Qiskit 2.4.1.  Pauli counts include the identity and refer to the
active-space Hamiltonian.  Fermionic
coefficients below $10^{-14}$~Ha are omitted during operator assembly, with no
additional Pauli-magnitude cutoff.  Tensor elements, Pauli-term counts, and
coefficient one-norms refer to the chosen symmetry-adapted orbital basis; residual
unitary freedom within degenerate blocks can
alter term counts by a few percent without affecting the spectrum or any
reported energy.

% ============================================================
\section{Results}
\label{sec:results}
% ============================================================

\subsection{Benchmark systems}
\label{subsec:systems}

We benchmark ten insulating or semiconducting crystals in $(2,2,2)$ folded
supercells: diamond, silicon, 3C-SiC, MgO, NaCl, CsCl, h-BN, AlN,
$\alpha$-quartz, and MgF$_2$.  The suite spans cubic, hexagonal, trigonal, and
tetragonal structures; Fig.~\ref{fig:benchmark_supercells} gives each structure,
space group, and active space.  Gaussian-type orbital basis sets and
Goedecker--Teter--Hutter pseudopotentials~\cite{GTH1996,Hartwigsen1998} are used
unless noted otherwise.

The active spaces are selected to be complete with respect to the relevant
near-degenerate manifolds at the sampled valence- and conduction-band edges,
avoiding active windows that cut through a degenerate band edge.
This gives CAS$(6,6)$ for diamond, 3C-SiC, and h-BN; CAS$(6,7)$ for silicon,
MgO, NaCl, and CsCl; CAS$(2,8)$ for AlN; CAS$(4,4)$ for $\alpha$-quartz; and
CAS$(6,8)$ for MgF$_2$.
The corresponding JW registers contain 8--16 spin-orbital qubits.
The CsCl active space contains seven spatial orbitals and realises the full
$\Z_2^8$ Boolean symmetry group available on the $(2,2,2)$ mesh: two spin
parities, three half translations, and three point-group reflections.
Figure~\ref{fig:cscl_active_orbitals} shows the seven symmetry-adapted active
orbitals, labelled by their space-group irreps, together with the spin-orbital
occupations and the eight symmetry-fixed coordinates removed by SAE.
Figure~\ref{fig:benchmark_supercells} shows all ten benchmark supercells.

\begin{figure*}[t]
  \centering
  \includegraphics[width=0.72\textwidth]{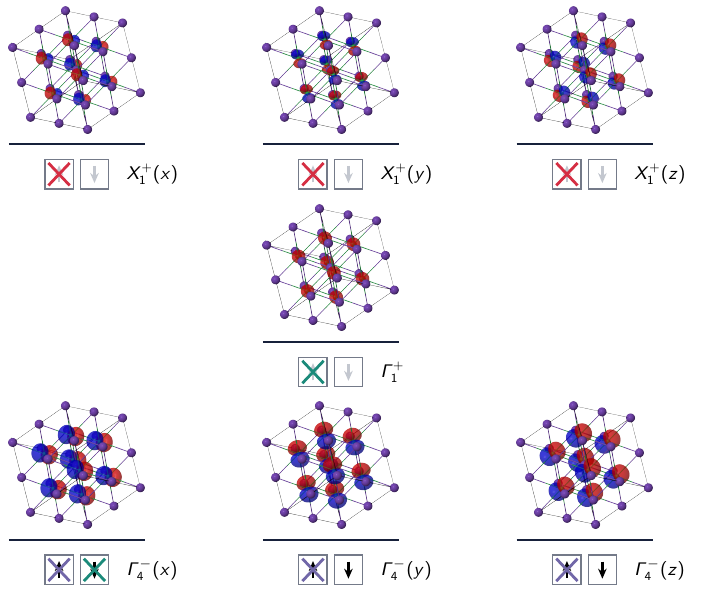}
  \caption[Caesium chloride (CsCl) active-space encoding]{%
    Caesium chloride (CsCl) CAS(6,7) active-space encoding.
    Spatial orbitals are shown as isosurfaces (red and blue denote opposite
    phases), and each adjacent box represents the occupation of one spin
    orbital, encoded as one qubit by the Jordan--Wigner mapping.
    Black (light-grey) arrows denote occupied (unoccupied) spin orbitals in the
    closed-shell KRHF reference determinant, represented in the folded
    $\Gamma$-supercell active basis.
    Coloured crosses mark the eight pivot spin-orbital occupations eliminated
    as redundant by the affine encoding.  Their colours indicate the
    symmetry-generator class assigned to each pivot in a valid elimination
    ordering: two spin-parity constraints (\legendsquare{spinparity}), three
    half-translation constraints (\legendsquare{halftranslation}), and three
    point-group reflections (\legendsquare{pointgroup}).  An individual
    row-reduced pivot constraint may combine generators from multiple classes.
    The occupations of redundant qubits are entirely determined by the retained
    occupations and the target symmetry sector, allowing the register in this
    case to be reduced from 14 to 6 qubits.
  }
  \label{fig:cscl_active_orbitals}
\end{figure*}

\subsection{Resource estimates and noiseless UCCSD-VQE}
\label{subsec:resources}

Figure~\ref{fig:resource_summary} reports the resource estimates and noiseless
UCCSD-VQE results for the ten-crystal benchmark suite: JW and SAE qubit,
Pauli-term, UCCSD-parameter, circuit-depth, and CNOT counts, together with
objective evaluations and converged VQE errors.
Every JW and SAE comparison uses the same symmetry-restored active-space
Hamiltonian $\op{H}_{\mathrm{sym}}$, active orbitals, and reference determinant;
only the encoding and the resulting ansatz representation change.
Appendix~\ref{app:generator-breakdown} gives the corresponding spin-parity,
translation, and other crystal-symmetry generators for each system.
For every system we compute the exact FCI energy in the physical fixed-
$(N_\uparrow,N_\downarrow)$ active-space sector and independently optimise
SAE-UCCSD.  Full JW-UCCSD is also optimised for nine systems; MgF$_2$ is omitted
because of its greater statevector cost, although its JW circuit resources and
SAE-VQE result remain in the figure.

\begin{figure*}[t]
  \centering
  \begin{minipage}{0.94\textwidth}
    \centering
    \makebox[\linewidth][c]{%
      \begin{minipage}[b]{0.17\linewidth}\centering
        \includegraphics[width=\linewidth]{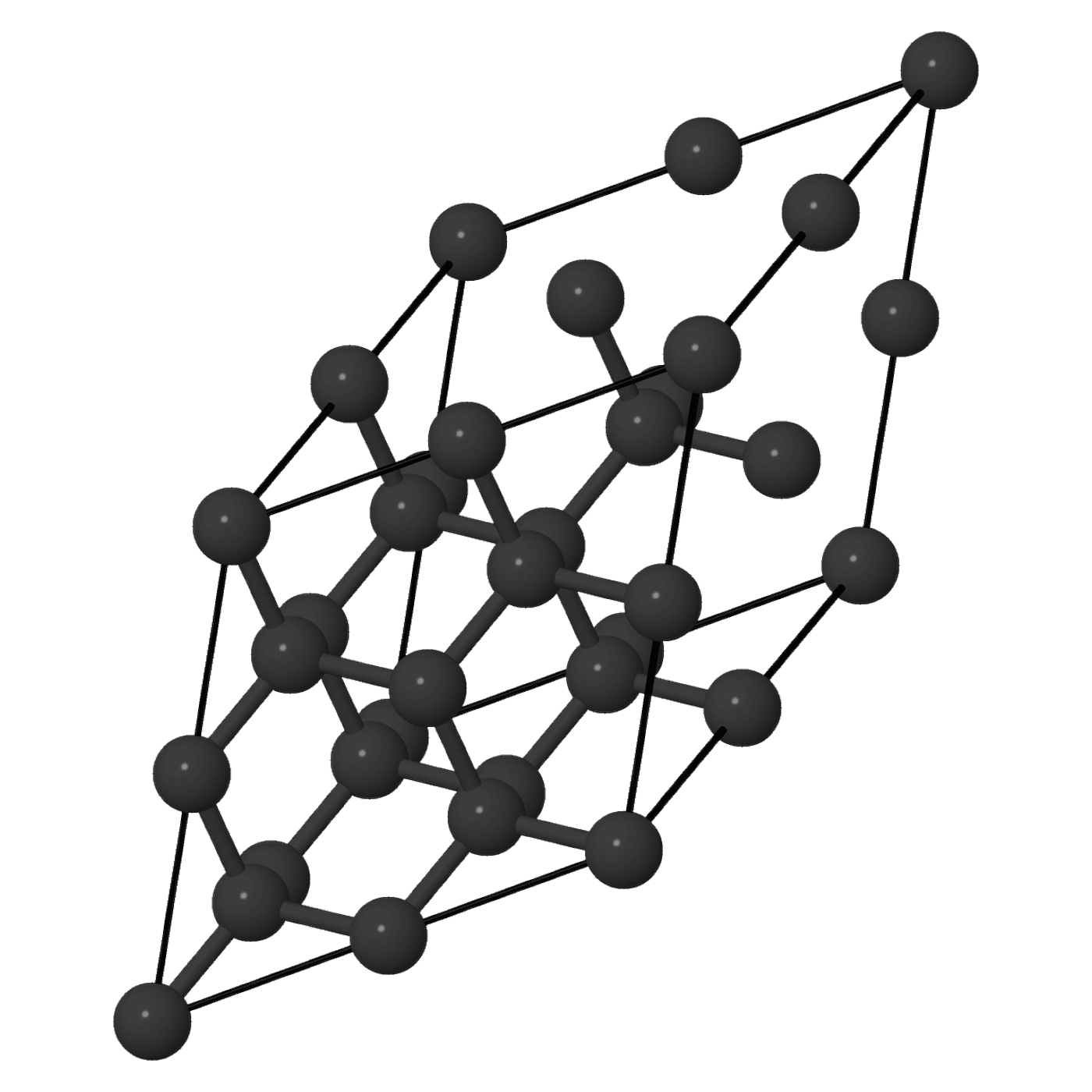}\\[-0.6ex]
        {\scriptsize\shortstack{\textbf{(a)} Diamond\\(C)\\
          {\color{black}$Fd\overline{3}m$}\\[-0.2ex]
          {\color{black}CAS(6,6)}}}
      \end{minipage}\hspace{0.013\linewidth}%
      \begin{minipage}[b]{0.17\linewidth}\centering
        \includegraphics[width=\linewidth]{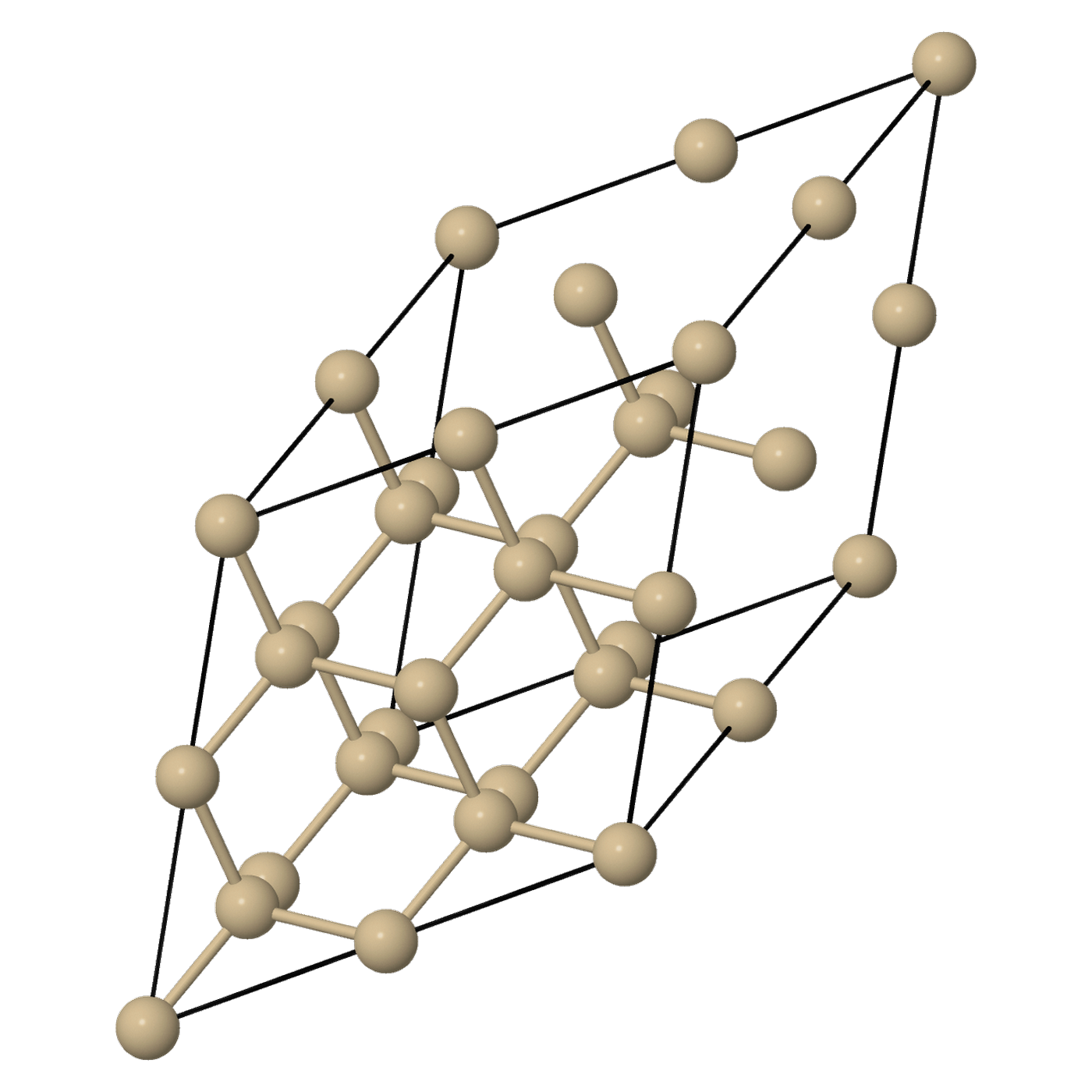}\\[-0.6ex]
        {\scriptsize\shortstack{\textbf{(b)} Silicon\\(Si)\\
          {\color{black}$Fd\overline{3}m$}\\[-0.2ex]
          {\color{black}CAS(6,7)}}}
      \end{minipage}\hspace{0.013\linewidth}%
      \begin{minipage}[b]{0.17\linewidth}\centering
        \includegraphics[width=\linewidth]{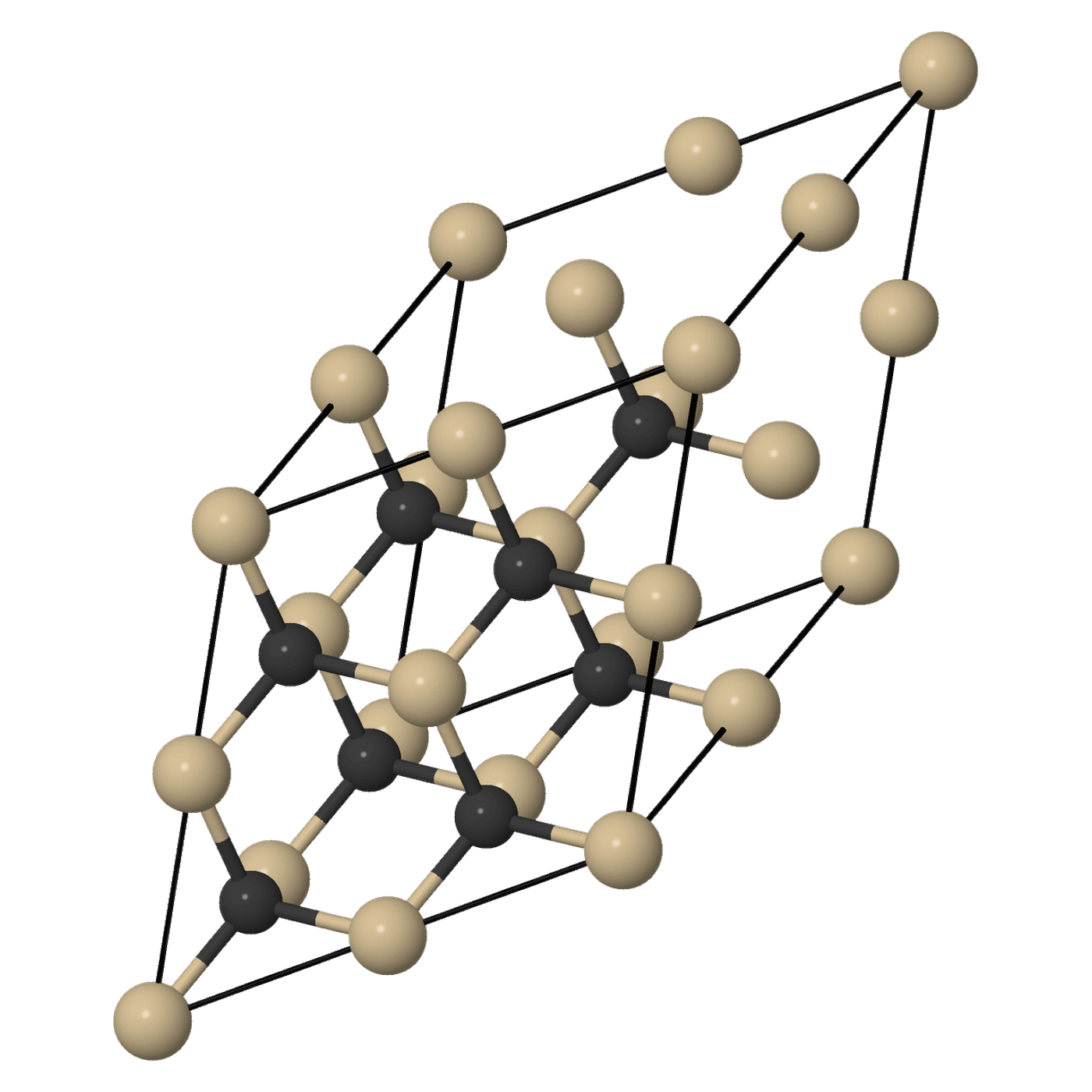}\\[-0.6ex]
        {\scriptsize\shortstack{\textbf{(c)} 3C silicon carbide\\(SiC)\\
          {\color{black}$F\overline{4}3m$}\\[-0.2ex]
          {\color{black}CAS(6,6)}}}
      \end{minipage}\hspace{0.013\linewidth}%
      \begin{minipage}[b]{0.17\linewidth}\centering
        \includegraphics[width=\linewidth]{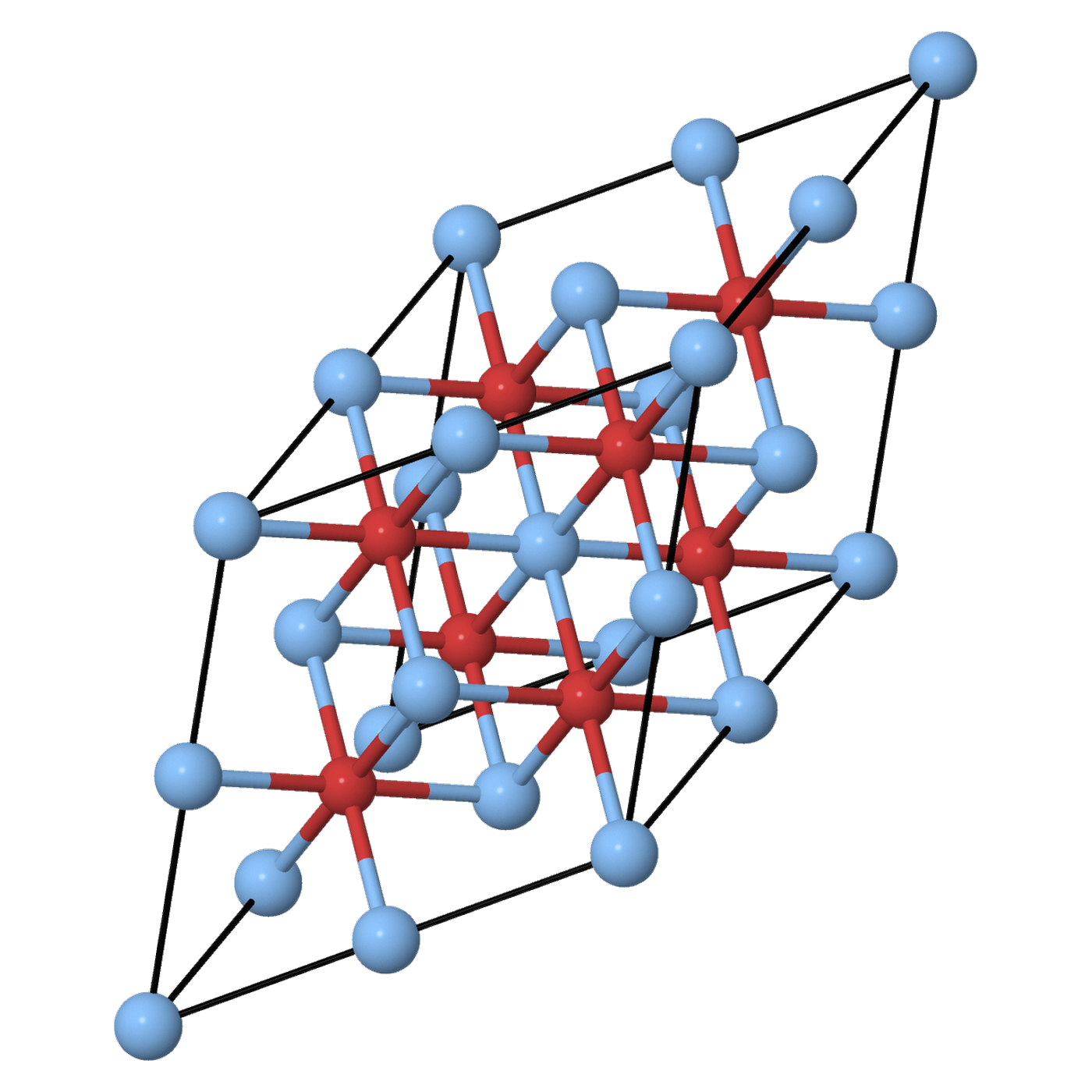}\\[-0.6ex]
        {\scriptsize\shortstack{\textbf{(d)} Magnesium oxide\\(MgO)\\
          {\color{black}$Fm\overline{3}m$}\\[-0.2ex]
          {\color{black}CAS(6,7)}}}
      \end{minipage}\hspace{0.013\linewidth}%
      \begin{minipage}[b]{0.17\linewidth}\centering
        \includegraphics[width=\linewidth]{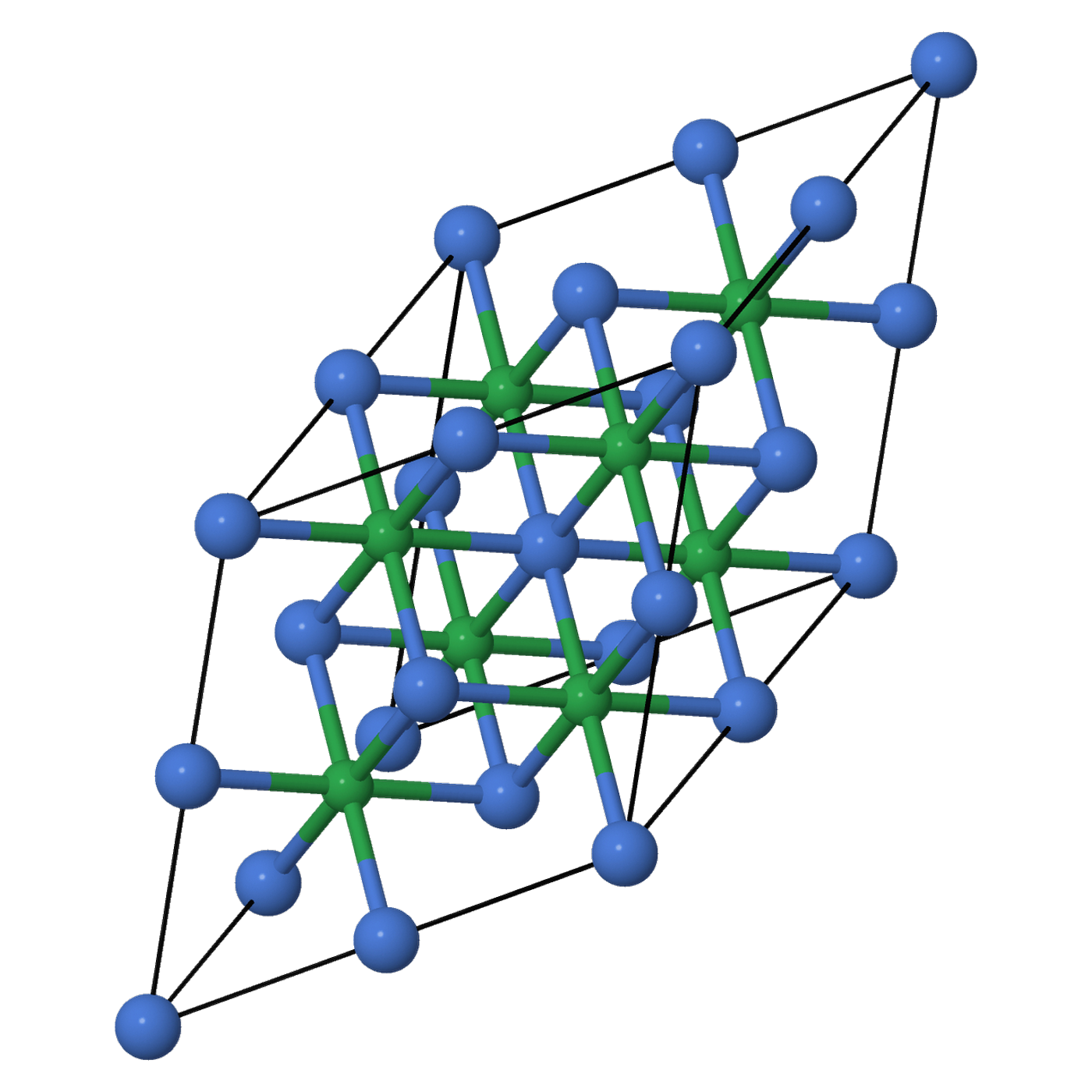}\\[-0.6ex]
        {\scriptsize\shortstack{\textbf{(e)} Sodium chloride\\(NaCl)\\
          {\color{black}$Fm\overline{3}m$}\\[-0.2ex]
          {\color{black}CAS(6,7)}}}
      \end{minipage}%
    }\par\vspace{1.8ex}
    \makebox[\linewidth][c]{%
      \begin{minipage}[b]{0.17\linewidth}\centering
        \includegraphics[width=\linewidth]{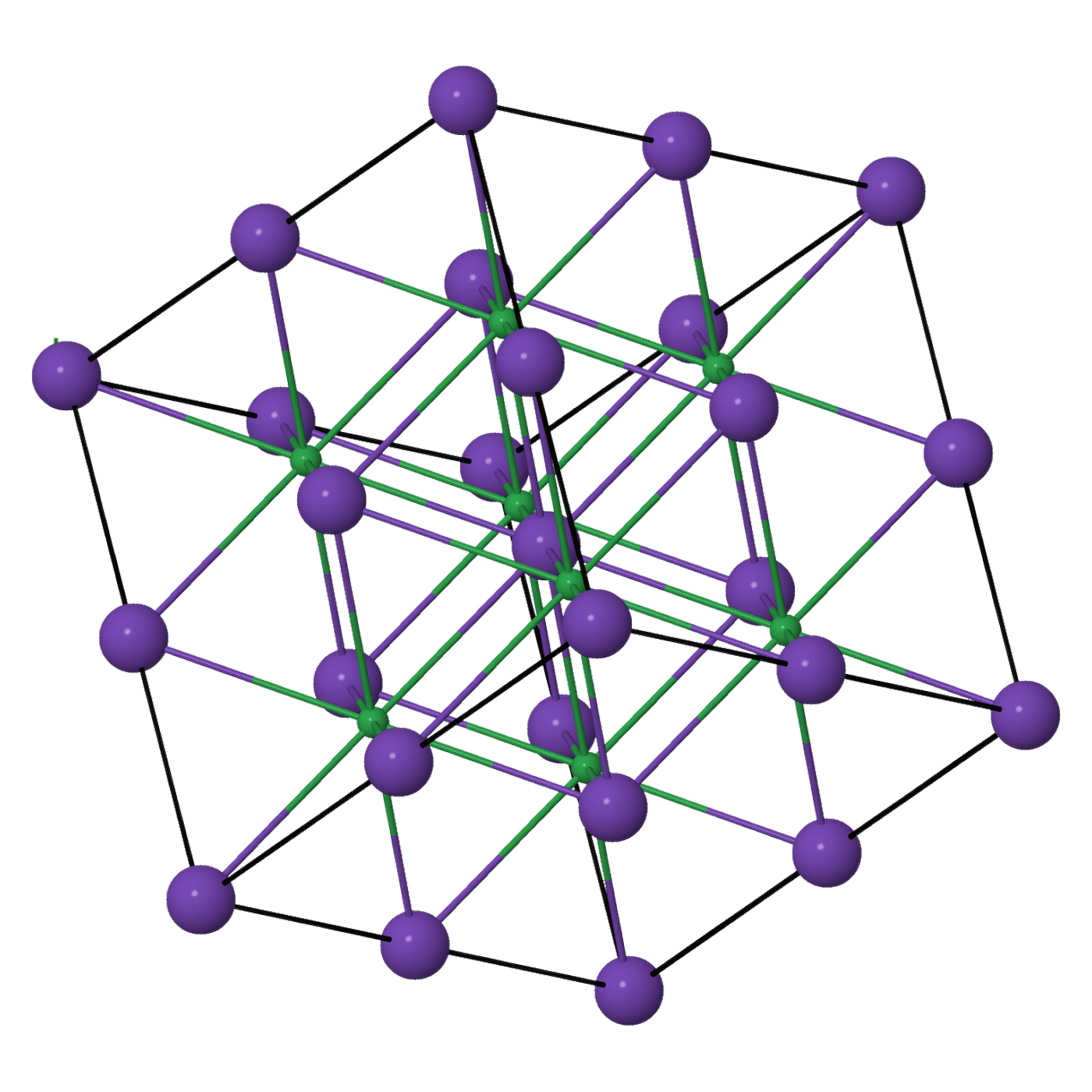}\\[-0.6ex]
        {\scriptsize\shortstack{\textbf{(f)} Caesium chloride\\(CsCl)\\
          {\color{black}$Pm\overline{3}m$}\\[-0.2ex]
          {\color{black}CAS(6,7)}}}
      \end{minipage}\hspace{0.013\linewidth}%
      \begin{minipage}[b]{0.17\linewidth}\centering
        \includegraphics[width=\linewidth]{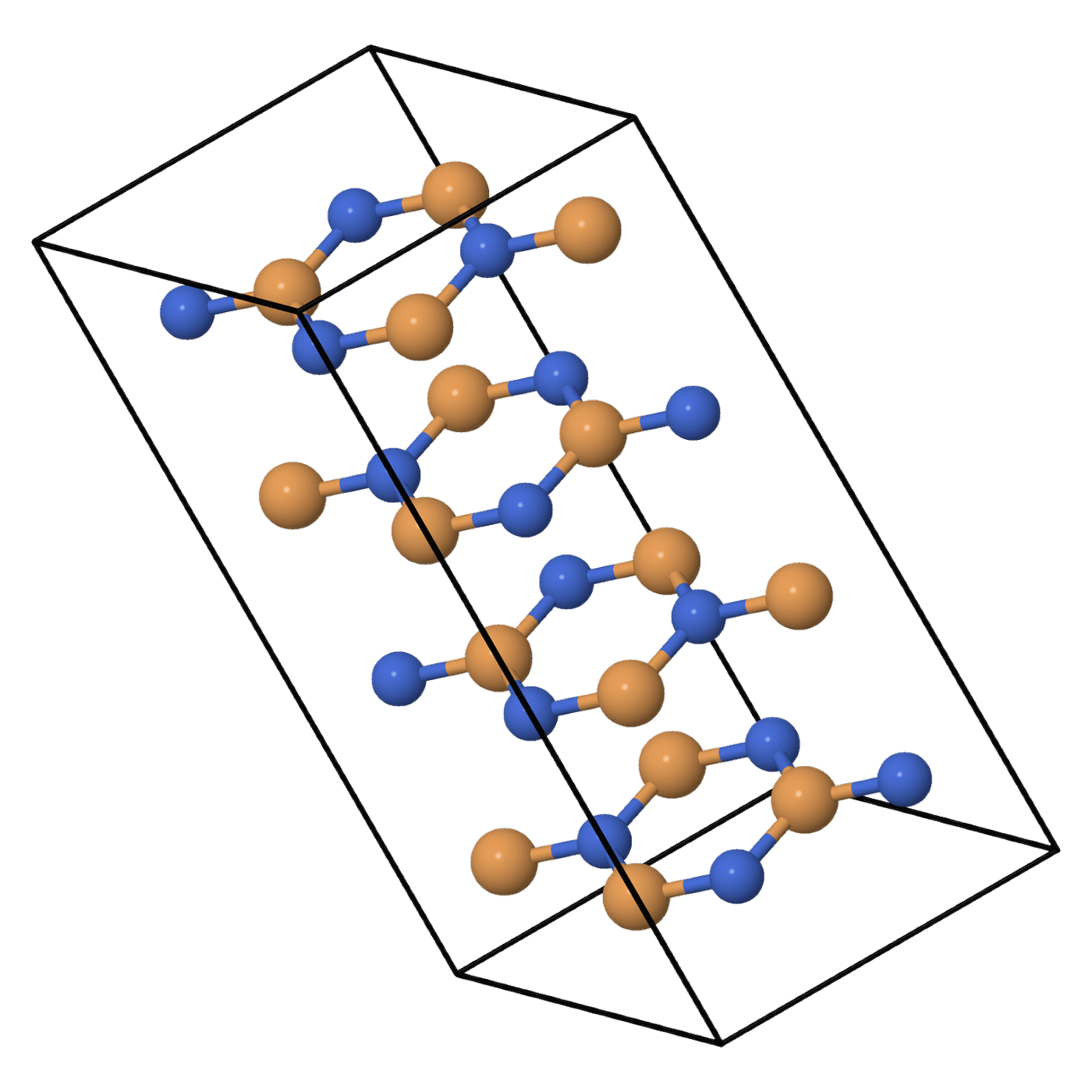}\\[-0.6ex]
        {\scriptsize\shortstack{\textbf{(g)} Hexagonal boron\\nitride (h-BN)\\
          {\color{black}$P6_3/mmc$}\\[-0.2ex]
          {\color{black}CAS(6,6)}}}
      \end{minipage}\hspace{0.013\linewidth}%
      \begin{minipage}[b]{0.17\linewidth}\centering
        \includegraphics[width=\linewidth]{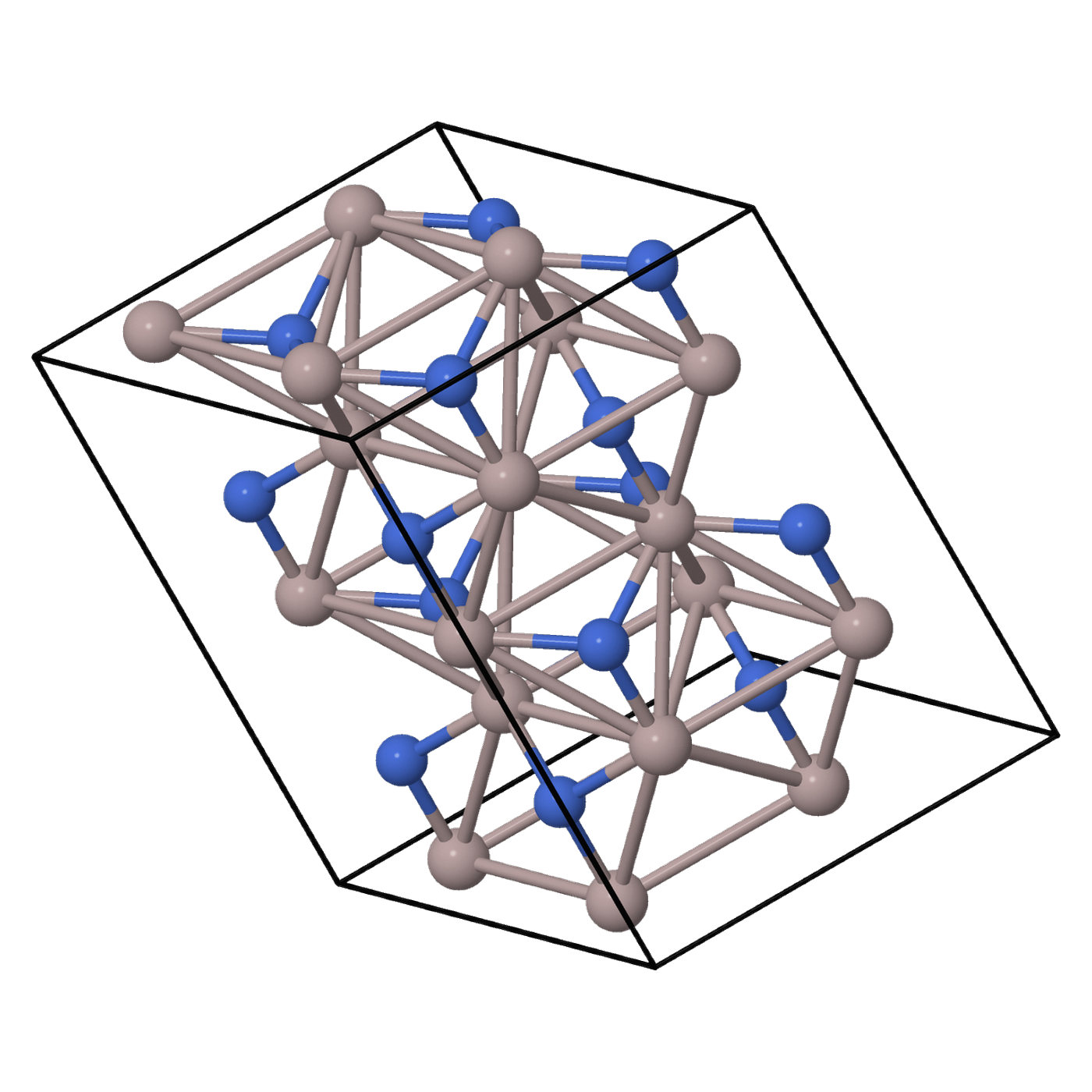}\\[-0.6ex]
        {\scriptsize\shortstack{\textbf{(h)} Aluminium nitride\\(AlN)\\
          {\color{black}$P6_3mc$}\\[-0.2ex]
          {\color{black}CAS(2,8)}}}
      \end{minipage}\hspace{0.013\linewidth}%
      \begin{minipage}[b]{0.17\linewidth}\centering
        \includegraphics[width=\linewidth]{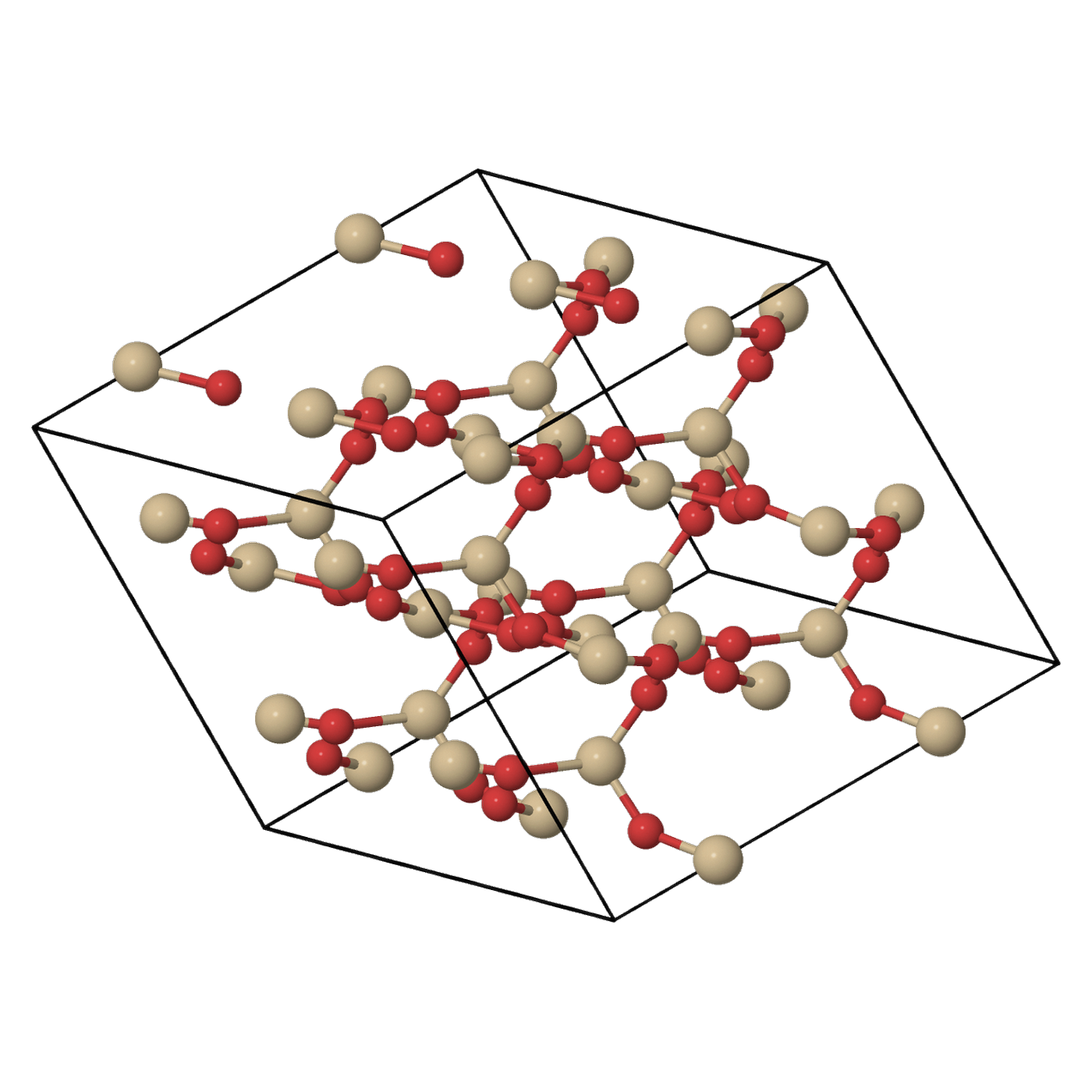}\\[-0.6ex]
        {\scriptsize\shortstack{\textbf{(i)} $\alpha$-quartz\\(SiO$_2$)\\
          {\color{black}$P3_121$}\\[-0.2ex]
          {\color{black}CAS(4,4)}}}
      \end{minipage}\hspace{0.013\linewidth}%
      \begin{minipage}[b]{0.17\linewidth}\centering
        \includegraphics[width=\linewidth]{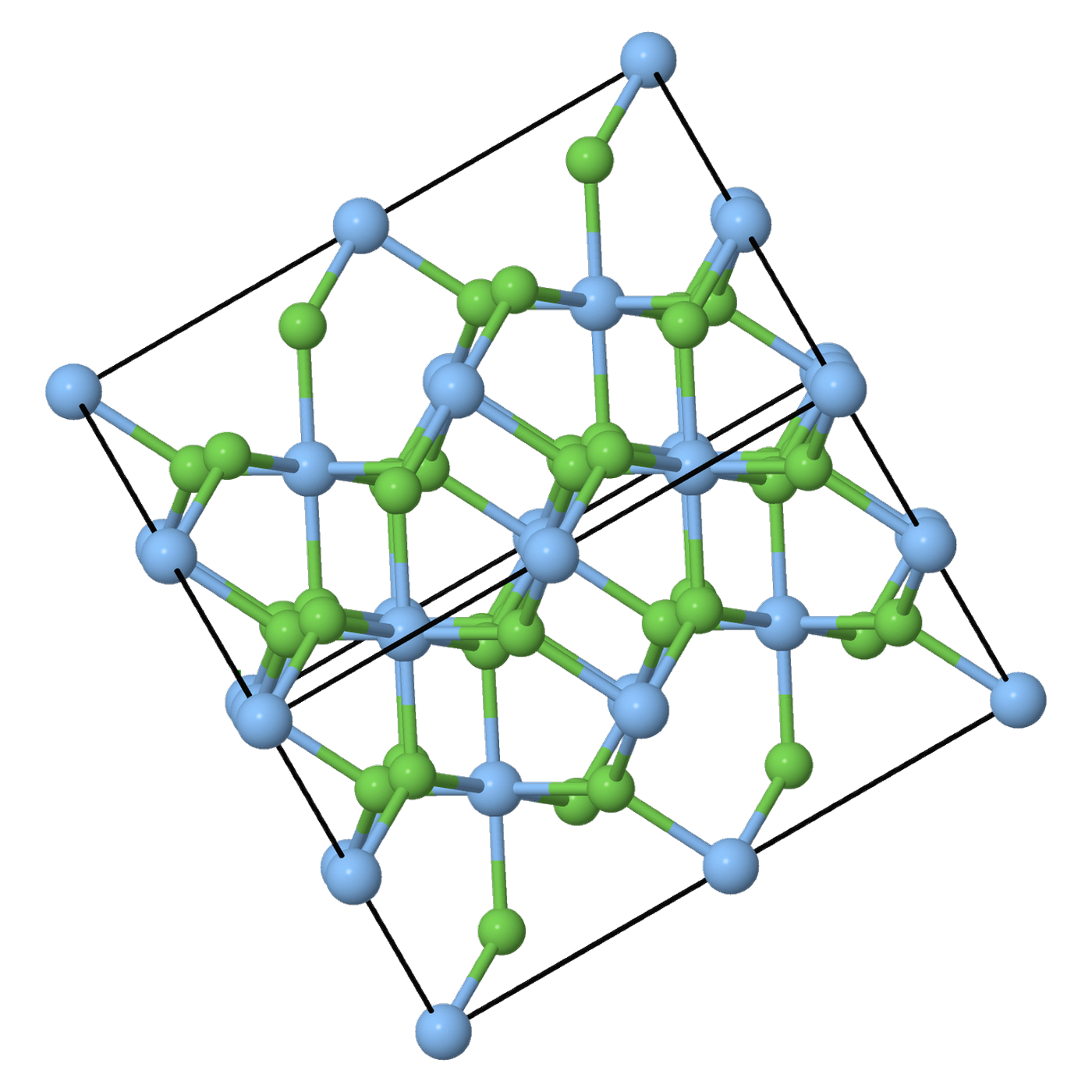}\\[-0.6ex]
        {\scriptsize\shortstack{\textbf{(j)} Magnesium fluoride\\(MgF$_2$)\\
          {\color{black}$P4_2/mnm$}\\[-0.2ex]
          {\color{black}CAS(6,8)}}}
      \end{minipage}%
    }
  \end{minipage}
  \par\vspace{1.2ex}
  \begin{minipage}{0.89\textwidth}
    \centering
    \includegraphics[width=\linewidth]{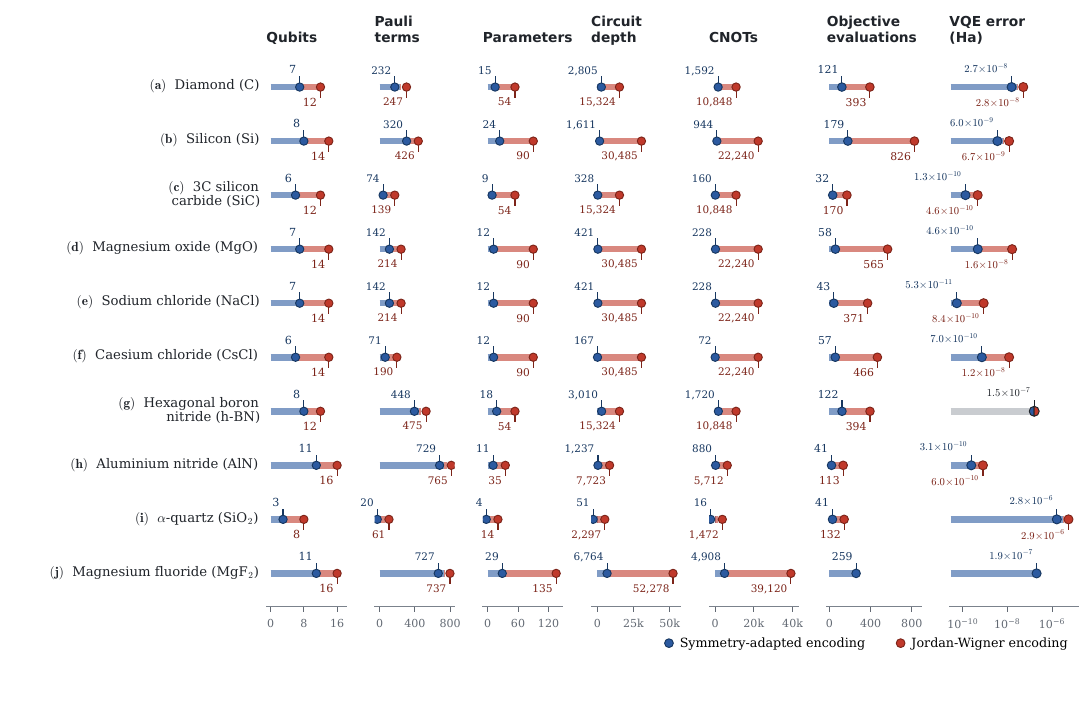}
  \end{minipage}
  \caption[Periodic benchmark systems and resource summary]{%
    The ten periodic benchmark systems, labelled \textbf{(a)}--\textbf{(j)} and
    shown as the $(2,2,2)$ supercells used in the calculations.  The set spans
    cubic, hexagonal, trigonal, and tetragonal structures.  Each supercell label
    gives the space group and CAS; the matching graph row reports encoding and
    circuit resources, optimisation effort, and noiseless UCCSD-VQE error for
    that active space.
    For each count metric, the blue segment ends at the SAE value and
    the complete blue--red bar ends at the corresponding Jordan--Wigner value,
    where available; labels give the exact values.
    The VQE-error column uses the same segmented-bar and endpoint convention;
    errors that coincide at the displayed precision merge into a neutral grey
    bar and a half-blue, half-red marker.
    VQE errors are measured relative to exact diagonalisation in the physical
    active-space sector.
    For MgF$_2$, the absent red bar and marker indicate that JW-VQE was omitted.
    The SAE reduction is exact within the target symmetry sector; where both
    optimisations were performed, it attains comparable
    errors with fewer objective evaluations.
  }
  \label{fig:benchmark_supercells}
  \label{fig:resource_summary}
\end{figure*}

\paragraph{Qubit reduction.}
The ten benchmark systems start from 8--16 JW qubits and reduce to 3--11 SAE
qubits.
Every system removes at least four qubits, and nine systems remove five or more.
CsCl gives the largest qubit reduction: its B2 structure and CAS(6,7) active
space admit eight independent Boolean generators, generating $\Z_2^8$ and
reducing the register from 14 to 6 qubits.
This saturates the maximum possible reduction in the present three-dimensional
Boolean-generator construction: two spin parities, up to three independent
translation characters, and up to three independent non-translation crystal
symmetries acting diagonally in the active space.
The three translation characters are what allow this periodic example to exceed
the five-generator maximum of molecular SAE.
Large reductions also occur in the non-cubic examples: AlN, $\alpha$-quartz,
and MgF$_2$ each remove five qubits.

\paragraph{Hamiltonian and UCCSD circuit resources.}
The Pauli-term reduction is system dependent.
It is large for CsCl and $\alpha$-quartz (63\% and 67\%, respectively), moderate
for NaCl, silicon, MgO, and 3C-SiC, and small for MgF$_2$, AlN, diamond, and h-BN.
The controlled comparison in Table~\ref{tab:jw-sf-resources} separates two
sources of circuit savings.  Symmetry filtering removes $67$--$87\%$ of the
UCCSD amplitudes and, on the full JW register, reduces the unoptimised CNOT count
by $72$--$97\%$.
Holding that indexed amplitude subset fixed, the SAE representation gives a
further $43$--$92\%$ CNOT reduction in eight systems.  The two exceptions are
AlN and MgF$_2$, for which this particular unoptimised product-formula
decomposition has 20\% and 10\% more CNOTs, respectively, after encoding.
Across the suite, the total JW-to-SAE reduction is $84$--$99.7\%$, combining
symmetry-filtering savings with those from reduced-register encoding.

For diamond, CsCl, and $\alpha$-quartz, the same two-stage saving persists after
level-3 compilation: under both all-to-all and nearest-neighbour-line
connectivity, SAE reduces CNOT counts by a further 26--91\% relative to JW-SF.
Periodic SAE therefore combines symmetry-forbidden-amplitude removal with, in
most cases, shorter representations of the surviving generators.

\paragraph{Noiseless VQE results and optimiser workload.}
Across all ten SAE-UCCSD-VQE calculations, the largest error relative to exact
FCI in the active-space sector is $2.84\times10^{-6}$~Ha, for
$\alpha$-quartz, well below the chemical-accuracy scale of
$1.6\times10^{-3}$~Ha.  These are ansatz-and-optimisation errors for fixed
active-space Hamiltonians, not estimates of convergence with basis set,
active-space size, pseudopotential, or $k$-point sampling.
For the nine systems with both full-JW and SAE UCCSD-VQE optimisations, both
optimisations converge well below chemical accuracy.  Their Hamiltonians contain the
same target-sector block, but their variational manifolds are not identical:
full JW retains every indexed spin-singlet UCCSD generator, whereas SAE
automatically omits generators whose target-sector image is zero.  Their
comparable final errors show empirically that this screening retains sufficient
variational flexibility for the present benchmarks.
Across those nine systems, finite-difference SLSQP uses 64--90\% fewer VQE
objective evaluations for SAE than for full JW
(Fig.~\ref{fig:resource_summary}).  This primarily reflects the smaller
symmetry-filtered parameter set and the associated reduction in finite-difference
evaluations per gradient, not the smaller qubit register alone.

\subsection{Benchmark validation}
\label{subsec:validation}

For each benchmark we form the unreduced JW matrix restricted to the complete
joint Boolean sector $Aa=c$ and compare its entire ordered spectrum with that
of the SAE Hamiltonian.  The two matrices have the common dimension
$2^{\nso-n_g}$, and the largest discrepancy over all levels and all systems is
$1.21\times10^{-11}$~Ha.  This directly tests isospectrality over the complete
selected Boolean sector.  The physical VQE reference is instead obtained by
exact diagonalisation after explicitly restricting the unreduced Hamiltonian
to fixed $(N_\uparrow,N_\downarrow)$.

The validation also reconstructs the folded KRHF determinant energy, checks the
GDF contractions through an independent assembly path, and confirms term by
term that every packed UCCSD generator has a definite symmetry character and
that character screening agrees with explicit SAE projection.  Detailed
spectral, character, and folded-integral results are given in
Appendix~\ref{app:reduction-validation}.

\FloatBarrier

% ============================================================
\section{Discussion}
\label{sec:discussion}
% ============================================================

Periodic SAE does not discard physical orbitals.  Instead, it removes occupation
coordinates whose values are fixed by the affine constraints $Aa=c$, representing
each eliminated coordinate as a function of the retained occupations and the
chosen sector.  The group average in Eq.~\eqref{eq:symmetry-restoration} removes
finite-precision couplings between the retained spatial-character sectors while
leaving every within-sector block unchanged; the affine projection then
represents the chosen block on fewer qubits.  The reduction therefore changes
the representation, not the symmetry-restored active-space energy problem,
consistent with the complete-spectrum agreement to $1.21\times10^{-11}$~Ha.

The distinctive periodic resource is the finite crystal translation group.
Translations whose active-space actions are non-trivial involutions can augment
the two spin parities and other crystal-symmetry generators, allowing periodic
SAE to exceed the molecular rank bound.  In the $(2,2,2)$ CsCl benchmark, the
three coordinate half translations contribute three of the eight eliminated
qubits.  On a larger mesh a half translation can instead act as $+I$ on the
selected active manifold, while a shorter translation becomes involutory only
after restriction.  The saving is therefore jointly controlled by the mesh,
crystal symmetry, and active-space representation.

The JW/JW-SF/SAE controls separate two effects that would otherwise be
conflated.  JW-to-JW-SF savings arise from omitting symmetry-forbidden UCCSD
amplitudes on the original register, whereas JW-SF-to-SAE differences arise
from representing the same surviving indexed generators on fewer qubits.  The
latter lowers the CNOT count in eight of the ten unoptimised circuits; AlN and
MgF$_2$ show that a smaller register does not guarantee a shorter circuit under
every fixed synthesis.  All three representative optimised compilations retain
an additional JW-SF-to-SAE saving.  The comparable JW and SAE VQE errors show
that screening preserves sufficient variational flexibility for these
benchmarks, but do not imply that the two ansatz manifolds are identical.  The
lower finite-difference objective count is primarily a consequence of the
smaller screened parameter set rather than register size.

The present workflow is restricted to spin-free, time-reversal-symmetric KRHF
calculations and real active-space involutions.  It treats positive diagonal
meshes by enumerating the finite translations and translated space-group
representatives, but does not include complex, magnetic, antiunitary, or
double-group representations.  Exactness refers to the selected Boolean sector of the finite,
symmetry-restored active-space Hamiltonian and does not establish convergence
with basis set, pseudopotential, active-space size, or $k$-point sampling.
Moreover, noiseless statevector VQE restricts the benchmarks to small active
spaces, and the circuit counts are logical resources; the representative
compilations use abstract connectivity and one transpiler seed rather than a
device-specific noise model.

% ============================================================
\section{Conclusion}
\label{sec:conclusion}
% ============================================================

We have developed a periodic SAE workflow that contracts primitive-cell
density-fitted $k$-point quantities directly into a real $\Gamma$-supercell
active space and incorporates commuting real active-space involutions derived
from crystal translations and space-group operations alongside spin parity.
Across ten crystals, periodic SAE removes
4--8 qubits; the CsCl CAS(6,7) example combines eight independent generators to
reduce 14 JW qubits to 6.  Complete-spectrum comparisons verify target-sector
isospectrality to numerical precision, and noiseless SAE-UCCSD reaches the
corresponding fixed-active-space FCI energies well within chemical accuracy
throughout the suite.  Relative to unscreened full-JW circuits, symmetry
screening removes 67--87\% of the indexed UCCSD parameters, while screening and
reduced-register encoding together lower the unoptimised logical CNOT counts by
84--99.7\%.

\begin{acknowledgments}
The author thanks Jonathan Tennyson, Alex Thom, and George Booth for helpful discussions.
This research is supported by an EPSRC Postdoctoral Prize Fellowship [EP/W524335/1] at UCL and an EPSRC Research Fellowship [EP/S021582/1] at the London Centre for Nanotechnology.
The author's PhD was supported by an EPSRC Industrial CASE studentship [EP/T517793/1].
\end{acknowledgments}

\section*{Code availability}

\texttt{QuantumSymmetry} v0.3.4 is available from
PyPI~\cite{QuantumSymmetrySoftware}; a
\href{https://github.com/dariopicozzi/quantumsymmetry/blob/master/docs/tutorials/08_periodic.ipynb}
{periodic-system tutorial} is available in the public source repository.

\clearpage
\onecolumngrid

\appendix

\section{Boolean-rank bounds}
\label{app:boolean-rank}

\begin{proposition}[Boolean-rank bound and its saturation]
\label{prop:boolean-rank}
Consider a non-relativistic, spin-conserving SAE constructed from the two spin
parities and mutually commuting real involutions in the active-space image of
the crystal symmetry group.  Assume that the active and frozen-core
one-particle spaces are invariant under the corresponding operations.  Let
$A$ be the resulting binary
spin-orbital character matrix, and let $e$ be the number of even entries among
the folded-mesh dimensions $(N_0,N_1,N_2)$.  Then
\begin{equation}
  \operatorname{rank}_{\GFtwo}A\leq 5 \quad \text{(molecular)},
  \qquad
  \operatorname{rank}_{\GFtwo}A\leq 5+e\leq 8
  \quad \text{(periodic)}.
  \label{eq:boolean-rank-bounds}
\end{equation}
The periodic upper bound is attained by the $(2,2,2)$ B2 CsCl CAS$(6,7)$
active space used in this work.
\end{proposition}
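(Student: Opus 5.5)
The bound follows from a count of Boolean characters, not from a count of operations. Each row of $A$ is the spin-orbital sign vector of one element of an elementary abelian $2$-group $\mathcal G$, namely the group generated by the two spin parities and the retained commuting real involutions, all acting diagonally in the simultaneous-character basis. The $\GFtwo$ row space of $A$ is the image of the homomorphism $\mathcal G\to\GFtwo^{\nso}$ that sends each element to its character pattern. Its dimension is therefore $\operatorname{rank}_{\GFtwo}A=\dim_{\GFtwo}\mathcal G'$, where $\mathcal G'$ is the faithful image of $\mathcal G$ on the spin-orbital space. I will split $\mathcal G'=\mathcal G_{\mathrm{spin}}\times\mathcal G_{\mathrm{sp}}$, with $\mathcal G_{\mathrm{spin}}\cong\Z_2^2$ generated by $P_\uparrow,P_\downarrow$ and $\mathcal G_{\mathrm{sp}}$ the spatial part. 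The splitting holds because spatial operations act identically on both spin partners, so their rows are spin-symmetric. The only spin-symmetric combination of spin rows is $P_\uparrow P_\downarrow$, the total parity. I therefore expect at most $2+\dim\mathcal G_{\mathrm{sp}}$, and the task is to bound $\dim\mathcal G_{\mathrm{sp}}$ by $3$ in the molecular case and by $3+e$ in the periodic case.

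\emph{Molecular step.} In this case $\mathcal G_{\mathrm{sp}}$ is the image of a finite point group $P\subset O(3)$. Every real one-dimensional character of $P$ factors through the abelianisation $P^{\mathrm{ab}}/(P^{\mathrm{ab}})^2$, so the characters actually appearing on the active orbitals span a quotient of $\mathrm{Hom}(P,\Z_2)$. I will check that $\dim_{\GFtwo}\mathrm{Hom}(P,\Z_2)\le 3$ for every crystallographic and finite point group, the maximum being attained by $D_{2h}\cong\Z_2^3$. This can be done by a short case check over the point-group families $C_n, C_{nv}, C_{nh}, D_n, D_{nh}, D_{nd}, S_{2n}, T, T_d, T_h, O, O_h, I, I_h$, or uniformly by noting that $P\subset O(3)\cong SO(3)\times\{\pm I\}$. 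For the uniform argument, the $SO(3)$ part has $\mathrm{Hom}(\cdot,\Z_2)$ of dimension $\le 2$ (largest for $D_2$), and inversion adds at most one more. This gives $n_g\le 5$.

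\emph{Periodic step.} Let $S$ be the finite space group acting on the supercell, with translation subgroup $T\cong\Z_{N_0}\times\Z_{N_1}\times\Z_{N_2}$ and quotient $S/T\cong P$. A homomorphism $\chi:S\to\Z_2$ restricts to $T$ and induces on $P$ the exact sequence
\[
0\to\mathrm{Hom}(P,\Z_2)\to\mathrm{Hom}(S,\Z_2)\to\mathrm{Hom}(T,\Z_2).
\]
This gives $\dim\mathrm{Hom}(S,\Z_2)\le 3+\dim\mathrm{Hom}(T,\Z_2)=3+e$, because $\mathrm{Hom}(\Z_N,\Z_2)$ is nontrivial exactly when $N$ is even. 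One subtlety has to be handled: the retained involutions need not be genuine group involutions, only involutions after restriction to the active space. The characters are nevertheless one-dimensional real representations of the active-space image of the group generated by the retained operations. That image is abelian and the characters take values $\pm1$, so every character still factors through $\mathrm{Hom}(S,\Z_2)$ and the same bound applies. Together with the spin part this gives $n_g\le 5+e\le 8$.

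\emph{Saturation and main obstacle.} For CsCl I will exhibit the explicit $8\times 14$ character matrix of the seven symmetry-adapted active orbitals. The rows are $P_\uparrow,P_\downarrow$, the three half translations $T_{\mathbf A_i/2}$, and the reflections $\sigma_{yz},\sigma_{xz},\sigma_{xy}$. I will show its rank is $8$ by restricting to the spin-up columns: the spatial $6\times7$ block of $\pm1$ sign patterns, augmented by an all-ones row for $P_\uparrow$, must have rank $7$ over $\GFtwo$, and the spin-down row then adds one more. Equivalently, the seven orbitals must realise characters that affinely span $\GFtwo^6$. This reduces to reading off the irreps in Fig.~\ref{fig:cscl_active_orbitals} and performing a short Gaussian elimination. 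I expect the main obstacle to be the molecular claim $\dim\mathrm{Hom}(P,\Z_2)\le3$ stated for all point groups, together with the rigorous justification that restricted (non-group) involutions cannot produce characters outside $\mathrm{Hom}(S,\Z_2)$. For the latter I would first argue that the restricted matrices define a representation of the abstract group on the invariant active space, which is guaranteed by the invariance hypothesis, and that a common eigenbasis makes each diagonal entry a homomorphism to $\{\pm1\}$.
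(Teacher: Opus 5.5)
Your periodic step fails at the claim that every orbital character ``still factors through $\mathrm{Hom}(S,\Z_2)$'', which is the point you flag as the main obstacle. It is false, not merely unproven. A common eigenbasis exists only for the retained commuting operations, so each diagonal entry $\eta_p$ is a homomorphism on the subgroup $H\le S$ generated by those operations, not on $S$. The simultaneous-character orbitals are generally components of higher-dimensional irreps of $S$, and characters of $H$ need not extend to $S$.

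Your own saturating example refutes the claim. For the $(2,2,2)$ CsCl supercell, $S=T\rtimes O_h$ with $T\cong\Z_2^3$ permuted by $O_h$, so $S^{\mathrm{ab}}\cong\Z_2\times O_h^{\mathrm{ab}}\cong\Z_2^3$ and $\dim\mathrm{Hom}(S,\Z_2)=3$. Every $\chi\in\mathrm{Hom}(S,\Z_2)$ restricts on $T$ to an $O_h$-invariant functional. The $X_1^+(x)$ orbital, however, has character $-1$ under $T_{\mathbf A_x/2}$ only. If your claim held, the spatial rank would be at most $3$ and $\operatorname{rank}_{\GFtwo}A\le5$, contradicting the rank $8$ you then prove.

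The molecular step has the same conflation: $p_x,p_y,p_z$ in an $O_h$ molecule give spatial rank $3>\dim\mathrm{Hom}(O_h,\Z_2)=2$. There, though, your bound ``for every finite point group'' survives once it is applied to the generated subgroup rather than the full point group.

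The repair is to replace $S$ by $H$ throughout.
\begin{itemize}
\item The spatial group $\mathcal G_{\mathrm{sp}}=\rho_{\mathcal A}(H)$ is an elementary Abelian $2$-quotient of $H$, so $\dim\mathcal G_{\mathrm{sp}}\le\dim\mathrm{Hom}(H,\Z_2)$.
\item Your exact sequence applied to $H\cap T\trianglelefteq H$ gives $\dim\mathrm{Hom}(H,\Z_2)\le\dim\mathrm{Hom}(H/(H\cap T),\Z_2)+\dim\mathrm{Hom}(H\cap T,\Z_2)$.
\item $H/(H\cap T)\cong HT/T$ is a finite subgroup of $O(3)$, so your case analysis bounds the first term by $3$.
\item $\dim\mathrm{Hom}(H\cap T,\Z_2)=\dim(H\cap T)[2]\le\dim T[2]=e$.
\end{itemize}
This is the character-dual of the paper's argument, which stays inside the image group: it bounds $\dim\mathcal E_T\le e$ and $\dim\mathcal E/\mathcal E_T\le3$ via $2$-sections of the translation and point groups. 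Your route has the advantage of actually justifying the point-group bound of three, which the paper states without proof.

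Two smaller points. First, the spin/spatial ``splitting'' is not a direct product in general. A spatial element acting as $-I$ on the active space, such as inversion on an all-odd manifold, coincides with $P_\uparrow P_\downarrow$. Only the inequality $\operatorname{rank}_{\GFtwo}A\le2+\dim\mathcal G_{\mathrm{sp}}$ is needed, and that does hold. Second, your CsCl rank computation via the all-ones-augmented spin-up block is correct and equivalent to the paper's pivot argument.
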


\begin{proof}
\emph{Upper bound.}
Let $\rho_{\mathcal A}$ be the real representation obtained after restricting
crystal operations to the invariant active space, and let $\mathcal E$ be the
elementary Abelian group generated by the retained commuting involutions in its
image.  Let $\mathcal E_T$ be the subgroup derived from the finite
Born--von K\'arm\'an translation group
$\mathcal T\simeq\mathbb Z_{N_0}\times\mathbb Z_{N_1}\times\mathbb Z_{N_2}$.
Although an active involution need not be represented by a global half
translation, every elementary Abelian 2-section of $\mathcal T$ has rank at
most $e$, the number of even cyclic factors.  Hence
$\dim_{\GFtwo}\mathcal E_T\leq e$.

The quotient $\mathcal E/\mathcal E_T$ is an elementary Abelian 2-section of a
three-dimensional crystallographic point group and has rank at most three.
Equivalently, no more than three independent non-translation characters can
remain after the translation image is factored out.  Adding the two spin
parities therefore gives
\begin{equation}
  \operatorname{rank}_{\GFtwo}A
  \leq 2+\dim_{\GFtwo}\mathcal E_T
       +\dim_{\GFtwo}(\mathcal E/\mathcal E_T)
  \leq5+e.
\end{equation}
A molecule has no translation subgroup and hence the bound five.  In three
dimensions $e\leq3$, giving the periodic bound eight.

\emph{Saturation by CsCl.}
For the $(2,2,2)$ CsCl supercell, $\mathbf A_i=2\mathbf a_i$.  The three half
translations $t_i=T_{\mathbf A_i/2}$ and the three coordinate reflections
$m_0=\sigma_{100}$, $m_1=\sigma_{010}$, and $m_2=\sigma_{001}$ are exact
involutions.  They commute modulo the supercell lattice: the reflections
commute with one another, while a reflection either fixes a half translation
or sends it to its inverse, which is the same translation modulo $\mathbf A_i$.

In the simultaneous-character active-orbital basis used in the calculation,
orbitals 62--68 transform as $\Gamma_4^-(x)$, $\Gamma_4^-(y)$, $\Gamma_4^-(z)$,
$\Gamma_1^+$, $X_1^+(x)$, $X_1^+(y)$, and $X_1^+(z)$.  For $i=x,y,z$, the
reflection normal to $i$ has character $-1$ only on $\Gamma_4^-(i)$, and the
half translation along $i$ has character $-1$ only on $X_1^+(i)$.  These six
distinct pivots prove that the spatial character matrix has rank six.

All six spatial generators have character $+1$ on $\Gamma_1^+$.  On its two
spin orbitals, however, $P_\uparrow$ and $P_\downarrow$ have binary characters
$(1,0)$ and $(0,1)$.  The two spin-parity rows are therefore independent of
one another and of the six spatial rows, so
$\operatorname{rank}_{\GFtwo}A_{\mathrm{CsCl}}=8$.  Since affine SAE removes
one qubit per independent row, the 14-spin-orbital register reduces to six
qubits.
\end{proof}

\section{Sector-preserving numerical symmetry restoration}
\label{app:symmetry-restoration}

The propositions below use the exact sign-restored representatives defined in
the Methods section; the reported removed-coefficient one-norm quantifies the
associated finite numerical regularisation.

Assume that the selected active one-particle space is invariant under the
retained finite spatial Boolean group $G_{\mathrm{sp}}\simeq\mathbb Z_2^r$.
Let $U_g$ be its
unitary representation on the active-space Fock space, and let $P_c$ project
onto the joint character sector $c$.  Only the spatial group enters this
restoration; spin parity is already exact because the Hamiltonian conserves
$N_\uparrow$ and $N_\downarrow$.

\begin{proposition}[Symmetry averaging preserves every sector block]
\label{prop:symmetry-restoration}
For any active-space operator $X$, the spatial group average satisfies
\begin{equation}
\begin{aligned}
  \mathcal R_G(X)
  &:=\frac{1}{|G_{\mathrm{sp}}|}\sum_{g\in G_{\mathrm{sp}}}
      U_gXU_g^\dagger
    =\sum_cP_cXP_c,\\
  P_c\mathcal R_G(X)P_c&=P_cXP_c.
\end{aligned}
  \label{eq:sector-block-preservation}
\end{equation}
Moreover, $\mathcal R_G$ is the Hilbert--Schmidt orthogonal projector onto the
operators commuting with $G_{\mathrm{sp}}$.  In a simultaneous character basis
of active spatial orbitals, it retains exactly the symmetry-allowed one- and
two-electron tensor elements.
\end{proposition}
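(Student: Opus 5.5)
The plan is to diagonalise the group action explicitly and then read every claim off the character expansion. Because $G_{\mathrm{sp}}\simeq\mathbb Z_2^r$ is abelian and each $U_g$ is a unitary involution, the $U_g$ commute and have a joint spectral decomposition
\begin{equation}
  U_g=\sum_c\chi_c(g)\,P_c ,
\end{equation}
where $\chi_c(g)=\pm1$ ranges over the characters of $G_{\mathrm{sp}}$ and $P_c$ projects onto the joint eigenspace labelled by $c$. Concretely, in the simultaneous-character orbital basis, $U_g$ acts on a determinant $\ket{a}$ as $(-1)^{(A_ga)}$ with $A_g$ the binary character row of $g$. The $P_c$ are orthogonal and sum to the identity on the active Fock space, so they are exactly the sector projectors of Eq.~\eqref{eq:sector-constraints} restricted to the spatial rows.

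\emph{Block formula.} Insert the decomposition into the average:
\begin{equation}
  \mathcal R_G(X)=\sum_{c,c'}\Bigl[\frac{1}{|G_{\mathrm{sp}}|}\sum_g\chi_c(g)\chi_{c'}(g)\Bigr]P_cXP_{c'} .
\end{equation}
Character orthogonality for the finite abelian group (the characters are real) makes the bracket $\delta_{cc'}$. This gives $\mathcal R_G(X)=\sum_cP_cXP_c$. Multiplying on both sides by $P_c$ and using $P_cP_{c'}=\delta_{cc'}P_c$ then yields $P_c\mathcal R_G(X)P_c=P_cXP_c$.

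\emph{Projector property.} There are three things to check for $\mathcal R_G$:
\begin{itemize}
  \item Idempotence follows from the block formula.
  \item The image commutes with every $U_g$, since block-diagonal operators commute with $\sum_c\chi_c(g)P_c$. Conversely, any $Y$ commuting with all $U_g$ satisfies $U_gYU_g^\dagger=Y$, so $\mathcal R_G(Y)=Y$. The image is therefore exactly the commutant.
  \item Self-adjointness with respect to $\langle X,Y\rangle=\operatorname{Tr}X^\dagger Y$ follows from $\operatorname{Tr}\bigl[(P_cXP_c)^\dagger Y\bigr]=\operatorname{Tr}\bigl[X^\dagger P_cYP_c\bigr]$ and summing over $c$. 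Alternatively, $X\mapsto U_gXU_g^\dagger$ is Hilbert--Schmidt unitary and the group is closed under inverses.
\end{itemize}
An idempotent self-adjoint map is an orthogonal projector.

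\emph{Tensor elements.} This is the only step that needs care with fermionic signs. Because the active space is invariant and each $U_g$ is the Fock-space lift of a one-particle orthogonal map that is diagonal in the character basis, it satisfies $U_ga_p^\dagger U_g^\dagger=\epsilon_p(g)a_p^\dagger$ with $\epsilon_p(g)=\pm1$. Here $U_g$ is the second-quantised exponential of a one-body operator, so no Jordan--Wigner string issues arise. The same relation holds for $a_p$, since the $\epsilon_p$ are real. Hence each monomial $a_p^\dagger a_q$ or $a_p^\dagger a_q^\dagger a_sa_r$ is multiplied by a product of the $\epsilon$'s over its indices, which is itself a character of $G_{\mathrm{sp}}$.

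Averaging over $g$ then keeps a monomial's coefficient unchanged when this product character is trivial and sends it to zero otherwise. This again uses character orthogonality, now against the trivial character. Expanding $X$ in normal-ordered monomials, $\mathcal R_G$ therefore retains exactly the symmetry-allowed $h_{pq}$ and $g_{pqrs}$. I expect the only subtle point to be justifying the lift property of $U_g$, which rests on the invariance assumption, and noting that $E_{\mathrm{core}}$ is invariant.
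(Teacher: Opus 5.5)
Your proof is correct and follows essentially the same route as the paper: the joint decomposition $U_g=\sum_c\chi_c(g)P_c$ with real-character orthogonality gives the block formula, the commutant, idempotence and self-adjointness checks give the projector claim, and the product of orbital characters over a monomial's indices gives the tensor-element statement. The differences are only in presentation. You derive idempotence and Hilbert--Schmidt self-adjointness from the block formula rather than from group closure and closure under inverses, and you make explicit the Fock-space relation $U_ga_p^\dagger U_g^\dagger=\epsilon_p(g)a_p^\dagger$, which the paper uses implicitly.
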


\begin{proof}
Conjugating the group average by any $U_h$ merely permutes its summands, so its
image commutes with $G_{\mathrm{sp}}$.  Conversely, every operator that already
commutes with the group is fixed by the average.  Group closure therefore makes
$\mathcal R_G$ idempotent, while closure under inversion makes it self-adjoint
in the Hilbert--Schmidt inner product.  It is thus the orthogonal projector onto
the commutant.

Because the group is Boolean, its characters are real and
$U_g=\sum_c\chi_c(g)P_c$.  Their orthogonality relation is
\begin{equation}
  \frac{1}{|G_{\mathrm{sp}}|}\sum_{g\in G_{\mathrm{sp}}}
  \chi_c(g)\chi_d(g)=\delta_{cd}.
\end{equation}
Substitution into the group average gives
Eq.~\eqref{eq:sector-block-preservation}: averaging removes only inter-sector
couplings and leaves every sector block unchanged.

Finally, let $\eta_p(g)\in\{\pm1\}$ be the character of active spatial orbital
$p$, shared by its two spin orbitals.  The monomial
$a_{p\sigma}^\dagger a_{q\sigma}$ has character $\eta_p\eta_q$, while
$a_{p\sigma}^\dagger a_{q\tau}^\dagger a_{s\tau}a_{r\sigma}$ has character
$\eta_p\eta_q\eta_s\eta_r$.  A non-trivial Boolean character averages to zero
and the trivial character to one.  Thus the integral-level projection sets
precisely the symmetry-forbidden one- and two-electron tensor elements to zero.
\end{proof}

\section{Folded active-space integrals and Ewald convention}
\label{app:folded-integrals}

\subsection{Forward and inverse \texorpdfstring{$k$-to-$\Gamma$}{k-to-Gamma}
transformation}

For the integral transformations below, let $\mathcal K$ denote the complete
$N_k$-point mesh and resolve each final real supercell orbital into its
primitive-cell $k$-point components:
\begin{align}
  B^{\mathbf k}_{\mu p}
  &=\frac{1}{\sqrt{N_k}}\sum_{\mathbf R}e^{-i\mathbf k\cdot\mathbf R}
    C^{\mathrm{SC}}_{\mathbf R\mu,p},
  \label{eq:inverse-k-fold}\\
  \sum_{\mathbf k\in\mathcal K}
  B^{\mathbf k\dagger}S^{\mathbf k}B^{\mathbf k}&=I,
  \label{eq:fold-orthonormality}
\end{align}
where $S^{\mathbf k}$ is the primitive-cell AO overlap matrix.  On the complete
commensurate mesh, the forward and inverse Fourier matrices are conjugate
transposes.  Consequently, no additional factor of $1/N_k$ enters the
one-electron contractions below.  Because $W$ can mix different $k$-point
components, a symmetry-adapted supercell orbital need not have a definite
primitive-cell crystal momentum.

\subsection{Direct one- and two-electron active contractions}

Here $u,v,w,x$ label active spatial orbitals; spin is restored in
Eq.~\eqref{eq:ham} in the usual spin-diagonal form.  Before frozen-core and
Ewald contributions are added, the one-electron active block is
\begin{equation}
  h^{(0)}_{uv}=\sum_{\mathbf k\in\mathcal K}
  B_u^{\mathbf k\dagger}h^{\mathbf k}B_v^{\mathbf k},
  \label{eq:folded-h1}
\end{equation}
where $h^{\mathbf k}$ is the primitive-cell core Hamiltonian.  In chemists'
notation, the supercell active two-electron block is
\begin{align}
  (uv|wx)_{\mathrm{SC}}
  &=\frac{1}{N_k}
    \sum_{\mathbf k_i,\mathbf k_j,\mathbf k_\ell\in\mathcal K}
    \bigl(u^{\mathbf k_i}v^{\mathbf k_j}
    \big|w^{\mathbf k_\ell}x^{\mathbf k_m}\bigr)_{\mathrm{GDF}},
  \label{eq:folded-eri}\\[-0.3ex]
  \mathbf k_m&\equiv
  \mathbf k_i-\mathbf k_j+\mathbf k_\ell\pmod{\mathbf G}.
  \label{eq:k-conservation}
\end{align}
Here $u^{\mathbf k}$ denotes the AO coefficient column
$B^{\mathbf k}_{\cdot u}$, and each term is evaluated by a primitive-cell GDF
contraction from the AO basis to the selected supercell orbitals.  The factor
$1/N_k$ is applied once, after momentum
conservation has reduced the nominal four-$k$ sum to a three-$k$ sum.  With
this convention,
\begin{equation}
  \op{H}_2=\frac12\sum_{uvwx}\sum_{\sigma\tau}
  (uv|wx)_{\mathrm{SC}}
  a_{u\sigma}^\dagger a_{w\tau}^\dagger a_{x\tau}a_{v\sigma}.
  \label{eq:spatial-h2-convention}
\end{equation}
Equation~\eqref{eq:folded-eri} is accumulated one
momentum-conserving quartet at a time.  It therefore requires
$O(n_{\mathrm{act}}^4)$ output memory rather than the
$O(N_k^3n_{\mathrm{act}}^4)$ memory required to materialise the complete
seven-dimensional $k$-resolved tensor.  The time-reversal-complete sum is real;
the residual imaginary part is checked before the real tensor is retained.

\subsection{Frozen core and Ewald exchange correction}

The frozen and active spaces are selected as unions of complete folded-energy
blocks.  The closed-shell frozen-core projector is therefore block diagonal in
$\mathbf k$:
\begin{equation}
  D_{\mathrm c}^{\mathbf k}
  =2B_{\mathrm c}^{\mathbf k}B_{\mathrm c}^{\mathbf k\dagger},
  \qquad
  N_{\mathrm c}=\sum_{\mathbf k}
  \operatorname{Tr}(D_{\mathrm c}^{\mathbf k}S^{\mathbf k}).
  \label{eq:core-density}
\end{equation}
Let
\begin{equation}
  V_{\mathrm c}^{\mathbf k,(0)}
  =J^{\mathbf k}[\{D_{\mathrm c}\}]
  -\frac12K^{\mathbf k,(0)}[\{D_{\mathrm c}\}]
  \label{eq:core-potential}
\end{equation}
be its restricted-Hartree--Fock potential before the special Ewald exchange
term.  The effective active one-electron operator and frozen-core constant are
then
\begin{align}
  h_{uv}^{\mathrm{eff}}
  &=h_{uv}^{(0)}+\sum_{\mathbf k}
    B_u^{\mathbf k\dagger}V_{\mathrm c}^{\mathbf k,(0)}B_v^{\mathbf k}
    -\frac{\xi_M}{2}\delta_{uv},
  \label{eq:active-h1-ewald}\\
  E_{\mathrm{core}}
  &=E_{\mathrm{nuc}}^{\mathrm{SC}}
    +\sum_{\mathbf k}\operatorname{Tr}
      (D_{\mathrm c}^{\mathbf k}h^{\mathbf k})\notag\\
  &\quad+\frac12\sum_{\mathbf k}\operatorname{Tr}
      (D_{\mathrm c}^{\mathbf k}V_{\mathrm c}^{\mathbf k,(0)})
    -\frac{\xi_M}{2}N_{\mathrm c}.
  \label{eq:core-energy-ewald}
\end{align}

We use the \textsc{PySCF} finite-supercell probe-charge convention
\begin{equation}
  \xi_M\equiv-2E_{\mathrm{Ew}}^{(1)}(\mathcal S),
  \label{eq:madelung-definition}
\end{equation}
where $E_{\mathrm{Ew}}^{(1)}(\mathcal S)$ is the Ewald electrostatic energy of
one unit probe charge and its neutralising background in the Born--von K\'arm\'an
supercell $\mathcal S$, whose lattice vectors are
$\mathbf A_i=N_i\mathbf a_i$~\cite{Ewald1921,Fraser1996}.  Unlike a bare
reciprocal-space sum, this definition includes the real-space, reciprocal-space,
self, and background terms.

For a spin-summed closed-shell density, the corresponding PySCF exchange
correction is
\begin{equation}
  \Delta K^{\mathbf k}
  =\xi_M S^{\mathbf k}D^{\mathbf k}S^{\mathbf k}.
  \label{eq:ewald-k-shift}
\end{equation}
Using $D^{\mathbf k}S^{\mathbf k}D^{\mathbf k}=2D^{\mathbf k}$, its
supercell exchange-energy contribution is
\begin{align}
  \Delta E_x
  &=-\frac14\sum_{\mathbf k}\operatorname{Tr}
    (D^{\mathbf k}\Delta K^{\mathbf k})
   =-\frac{\xi_M}{2}N_e,
  \label{eq:ewald-energy-shift}\\
  \Delta\op{H}&=-\frac{\xi_M}{2}\op{N}.
  \label{eq:ewald-number-operator}
\end{align}
Here $N_e=\sum_{\mathbf k}\operatorname{Tr}(D^{\mathbf k}S^{\mathbf k})$
is the total electron number in the Born--von K\'arm\'an supercell.
Equations~\eqref{eq:active-h1-ewald} and
\eqref{eq:core-energy-ewald} are the active/frozen-core partition of
Eq.~\eqref{eq:ewald-number-operator}.  The Ewald-corrected
$K_{\mathrm c}^{\mathbf k}$ is obtained directly from the KRHF
object with \texttt{exxdiv="ewald"}; its trace already supplies the
$-\xi_MN_{\mathrm c}/2$ contribution, which is not added a second time.  The
same corrected core exchange can be used in the active contraction because
complete, mutually orthogonal folded blocks obey
$B_{\mathrm{act}}^{\mathbf k\dagger}S^{\mathbf k}D_{\mathrm c}^{\mathbf k}
S^{\mathbf k}B_{\mathrm{act}}^{\mathbf k}=0$ after summation over $\mathbf k$.
The active contribution is applied explicitly as $-\xi_M\delta_{uv}/2$.  Thus the
frozen-core and active contributions apply the Ewald correction exactly once.

\section{Benchmark definitions and target sectors}

\subsection{Benchmark input parameters}
\label{app:benchmark-inputs}

Table~\ref{tab:benchmark-inputs} records the primitive-cell input parameters used
for the benchmark calculations.
Lattice lengths are in \AA; internal parameters are fractional coordinates in
the corresponding primitive-cell construction.
The active-orbital windows are one-based indices in the folded and
symmetry-adapted supercell-orbital ordering used in the benchmark calculations.
Except where a row states otherwise, the electronic-structure inputs are the
\texttt{gth-szv} basis and \texttt{gth-pade} pseudopotential.

\begingroup
\renewcommand{\arraystretch}{1.22}
\small
\setlength{\tabcolsep}{3pt}
\begin{center}
\begin{minipage}{\columnwidth}
\centering
\refstepcounter{table}\label{tab:benchmark-inputs}
\parbox{\columnwidth}{\centering
  \textbf{TABLE~\thetable:} Periodic benchmark inputs used for the results
  shown in Fig.~\ref{fig:resource_summary}.  The abbreviation fcc denotes
  face-centred cubic.
}
\vspace{0.4ex}

\begin{tabular}{@{}p{0.29\columnwidth}p{0.67\columnwidth}@{}}
\toprule
System & Primitive-cell specification \\
\midrule
Diamond (C) & fcc, $a=3.567$; active orbital indices 30--35. \\
Silicon (Si) & fcc, $a=5.431$; active orbital indices 30--36. \\
3C silicon carbide (SiC) & fcc, $a=4.360$; active orbital indices 30--35. \\
Magnesium oxide (MgO) & fcc, $a=4.211$; active orbital indices 62--68. \\
Sodium chloride (NaCl) & fcc, $a=5.640$; active orbital indices 30--36; basis: Na/Cl \texttt{gth-szv};
  pseudo: Na \texttt{gth-pade-q1}, Cl \texttt{gth-pade}. \\
Caesium chloride (CsCl) & simple cubic B2, $a=4.123$; active orbital indices 62--68; basis:
  \texttt{gth-szv-molopt-sr}; pseudo: \texttt{gth-pade}. \\
Hexagonal boron nitride (h-BN) & hexagonal, $a=2.504$, $c=6.661$; active orbital indices 62--67. \\
Aluminium nitride (AlN) & hexagonal wurtzite, $a=3.110$, $c/a=1.601$, $u=3/8$; active orbital indices
  64--71. \\
$\alpha$-quartz (SiO$_2$) &
  hexagonal, $a=4.913$, $c=5.405$, $u_{\rm Si}=0.4697$,
  $(x,y,z)_{\rm O}=(0.4133,0.2667,0.1188)$; active orbital indices 191--194. \\
Magnesium fluoride (MgF$_2$) & rutile tetragonal, $a=4.621$, $c=3.052$, $u_{\rm F}=0.304$;
  active orbital indices 190--197. \\
\bottomrule
\end{tabular}
\end{minipage}
\end{center}
\endgroup

\subsection{Symmetry generators and target sectors}
\label{app:generator-breakdown}

Table~\ref{tab:generator-breakdown} lists the independent Boolean generators
used for each benchmark system shown in Fig.~\ref{fig:resource_summary}.
The two spin-parity generators are present in every system.
The symbols $\mathbf{a}_i$ and $\mathbf{A}_i=N_i\mathbf{a}_i$ denote the
primitive and supercell lattice vectors, respectively.  Thus
$T_{\mathbf{A}_i/2}=T_{(N_i/2)\mathbf{a}_i}$ is a half-supercell translation;
for the $(2,2,2)$ meshes used here it is a translation by one primitive vector.
The labels $\sigma_{hkl}$ denote point-group reflections whose normals are
parallel to $h\mathbf{b}_0+k\mathbf{b}_1+l\mathbf{b}_2$, where
$\mathbf{b}_0,\mathbf{b}_1,\mathbf{b}_2$ are primitive reciprocal vectors;
overbars denote negative indices, and $i$ denotes inversion.  Since every
benchmark in this table uses a $(2,2,2)$ mesh, we abbreviate
$T_{m_0m_1m_2}=T_{(m_0\mathbf A_0+m_1\mathbf A_1+m_2\mathbf A_2)/2}$.
The labels $C_{2,[uvw]}$ denote twofold rotations about axes parallel to
$u\mathbf a_0+v\mathbf a_1+w\mathbf a_2$; products such as
$T_{011}C_{2,[uvw]}$ are translated space-group operations.

\begingroup
\renewcommand{\arraystretch}{1.08}
\small
\setlength{\tabcolsep}{3pt}
\begin{center}
\begin{minipage}{\columnwidth}
\centering
\refstepcounter{table}\label{tab:generator-breakdown}
\parbox{\columnwidth}{\centering
  \textbf{TABLE~\thetable:} Independent Boolean generators retained for each
  benchmark system.  Each character of the target-sign string $\mathbf s$ is
  listed in the same order as the generators across the three preceding columns
  and determines $c_j=(1-s_j)/2$.  The final column equals the JW-to-SAE
  qubit-count reduction shown in Fig.~\ref{fig:resource_summary}.
}
\vspace{0.6ex}

\begin{tabular}{@{}p{0.19\columnwidth}p{0.13\columnwidth}p{0.23\columnwidth}p{0.21\columnwidth}p{0.13\columnwidth}c@{}}
\toprule
System & Spin parities & Translations & Other crystal operations & $\mathbf s$ & $n_g$ \\
\midrule
Diamond (C) & $P_\uparrow$, $P_\downarrow$ &
  -- &
  $\sigma_{\bar{1}01}$, $\sigma_{010}$, $i$ & \texttt{--+++} & 5 \\
Silicon (Si) & $P_\uparrow$, $P_\downarrow$ &
  $T_{\mathbf{A}_0/2}$, $T_{\mathbf{A}_1/2}$, $T_{\mathbf{A}_2/2}$ &
  $i$ & \texttt{--++++} & 6 \\
3C silicon carbide (SiC) & $P_\uparrow$, $P_\downarrow$ &
  $T_{\mathbf{A}_2/2}$, $T_{\mathbf{A}_1/2}$ &
  $T_{011}C_{2,[\bar{1}11]}$, $T_{011}C_{2,[1\bar{1}1]}$ & \texttt{--++++} & 6 \\
Magnesium oxide (MgO) & $P_\uparrow$, $P_\downarrow$ &
  $T_{\mathbf{A}_2/2}$, $T_{\mathbf{A}_1/2}$ &
  $\sigma_{\bar{1}11}$, $i$, $\sigma_{1\bar{1}1}$ & \texttt{--+++++} & 7 \\
Sodium chloride (NaCl) & $P_\uparrow$, $P_\downarrow$ &
  $T_{\mathbf{A}_2/2}$, $T_{\mathbf{A}_1/2}$ &
  $\sigma_{\bar{1}11}$, $i$, $\sigma_{1\bar{1}1}$ & \texttt{--+++++} & 7 \\
Caesium chloride (CsCl) & $P_\uparrow$, $P_\downarrow$ &
  $T_{\mathbf{A}_0/2}$, $T_{\mathbf{A}_1/2}$, $T_{\mathbf{A}_2/2}$ &
  $\sigma_{100}$, $\sigma_{010}$, $\sigma_{001}$ & \texttt{--++++++} & 8 \\
Hexagonal boron nitride (h-BN) & $P_\uparrow$, $P_\downarrow$ &
  $T_{\mathbf{A}_0/2}$, $T_{\mathbf{A}_1/2}$ &
  -- & \texttt{--++} & 4 \\
Aluminium nitride (AlN) & $P_\uparrow$, $P_\downarrow$ &
  $T_{\mathbf{A}_0/2}$, $T_{\mathbf{A}_1/2}$, $T_{\mathbf{A}_2/2}$ &
  -- & \texttt{--+++} & 5 \\
$\alpha$-quartz (SiO$_2$) & $P_\uparrow$, $P_\downarrow$ &
  $T_{\mathbf{A}_0/2}$, $T_{\mathbf{A}_1/2}$, $T_{\mathbf{A}_2/2}$ &
  -- & \texttt{+++++} & 5 \\
Magnesium fluoride (MgF$_2$) & $P_\uparrow$, $P_\downarrow$ &
  $T_{\mathbf{A}_0/2}$, $T_{\mathbf{A}_1/2}$ &
  $i$ & \texttt{--+++} & 5 \\
\bottomrule
\end{tabular}
\end{minipage}
\end{center}
\endgroup

\section{Symmetry-filtered circuit resources}
\label{app:jw-sf-resources}

JW-SF applies SAE excitation screening on the unreduced JW Hamiltonian and
register, so its qubit and Pauli-term counts equal those of JW.

\begingroup
\renewcommand{\arraystretch}{1.08}
\small
\setlength{\tabcolsep}{4pt}
\begin{center}
\begin{minipage}{\columnwidth}
\centering
\refstepcounter{table}\label{tab:jw-sf-resources}
\parbox{\columnwidth}{\centering
  \textbf{TABLE~\thetable:} Unoptimised singlet-UCCSD circuit resources.
  Depth and CNOT counts are obtained after three successive Qiskit
  decompositions, before transpiler optimisation or connectivity constraints.
}
\vspace{0.5ex}

\begin{tabular}{@{}lrrrrrrrrr@{}}
\toprule
& \multicolumn{3}{c}{UCCSD parameters}
& \multicolumn{3}{c}{Circuit depth}
& \multicolumn{3}{c}{CNOT count} \\
\cmidrule(lr){2-4}\cmidrule(lr){5-7}\cmidrule(lr){8-10}
System
& JW & JW-SF & SAE
& JW & JW-SF & SAE
& JW & JW-SF & SAE \\
\midrule
Diamond (C)
& 54 & 15 & 15
& 15{,}324 & 3{,}751 & 2{,}805
& 10{,}848 & 2{,}864 & 1{,}592 \\
Silicon (Si)
& 90 & 24 & 24
& 30{,}485 & 3{,}351 & 1{,}611
& 22{,}240 & 2{,}368 & 944 \\
3C silicon carbide (SiC)
& 54 & 9 & 9
& 15{,}324 & 468 & 328
& 10{,}848 & 432 & 160 \\
Magnesium oxide (MgO)
& 90 & 12 & 12
& 30{,}485 & 560 & 421
& 22{,}240 & 576 & 228 \\
Sodium chloride (NaCl)
& 90 & 12 & 12
& 30{,}485 & 560 & 421
& 22{,}240 & 576 & 228 \\
Caesium chloride (CsCl)
& 90 & 12 & 12
& 30{,}485 & 560 & 167
& 22{,}240 & 576 & 72 \\
Hexagonal boron nitride (h-BN)
& 54 & 18 & 18
& 15{,}324 & 3{,}943 & 3{,}010
& 10{,}848 & 3{,}008 & 1{,}720 \\
Aluminium nitride (AlN)
& 35 & 11 & 11
& 7{,}723 & 1{,}311 & 1{,}237
& 5{,}712 & 736 & 880 \\
$\alpha$-quartz (SiO$_2$)
& 14 & 4 & 4
& 2{,}297 & 282 & 51
& 1{,}472 & 192 & 16 \\
Magnesium fluoride (MgF$_2$)
& 135 & 29 & 29
& 52{,}278 & 6{,}027 & 6{,}764
& 39{,}120 & 4{,}448 & 4{,}908 \\
\bottomrule
\end{tabular}
\end{minipage}
\end{center}
\endgroup

\section{Numerical validation}
\label{app:reduction-validation}

\subsection{Complete target-sector spectra}

Let $H_{\mathrm{JW},c}$ be the unreduced JW matrix restricted to computational
basis states satisfying the complete joint constraint $Aa=c$, and let
$H_{\mathrm{SAE}}$ be the reduced matrix.  Both are constructed from the same
symmetry-restored active-space Hamiltonian.
Both matrices have dimension
$d_c=2^{\nso-\operatorname{rank}_{\GFtwo}A}$.  After ordering their complete
eigenvalue lists, we report
\begin{equation}
  \Delta_{\mathrm{spec}}
  =\max_{0\leq m<d_c}
  \left|E_m(H_{\mathrm{JW},c})-E_m(H_{\mathrm{SAE}})\right|.
  \label{eq:spectrum-validation-metric}
\end{equation}
The folded-reference closure $\Delta_{\mathrm{HF}}$ compares the expectation
value of the active-space
Hamiltonian on the folded KRHF determinant with $N_kE_{\mathrm{KRHF}}$.
The largest complete-spectrum and HF-closure residuals are respectively
$1.21\times10^{-11}$ and $5.0\times10^{-10}$~Ha.  Before diagonalisation, the
largest Hermiticity
residuals of the restricted JW and SAE matrices are both
$6.9\times10^{-18}$~Ha.

\begingroup
\renewcommand{\arraystretch}{1.12}
\small
\setlength{\tabcolsep}{7pt}
\begin{center}
\refstepcounter{table}\label{tab:reduction-validation}
\parbox{\columnwidth}{\centering
  \textbf{TABLE~\thetable:} Complete-spectrum numerical validation of the periodic active-space
  construction.  $d_c$ is the common dimension of the complete joint Boolean
  sector and the SAE register; $\Delta_{\mathrm{spec}}$ is defined in
  Eq.~\eqref{eq:spectrum-validation-metric}, and $\Delta_{\mathrm{HF}}$ is the
  folded-reference closure.  All residuals are in Hartree.
}
\vspace{0.4ex}

\begin{tabular}{lccrcc}
\toprule
System & CAS & Qubits (JW/SAE) & $d_c$ & $\Delta_{\mathrm{spec}}$ &
  $\Delta_{\mathrm{HF}}$ \\
\midrule
Diamond (C) & (6,6) & 12/7 & 128 & $2.27\times10^{-13}$ & $5.00\times10^{-10}$ \\
Silicon (Si) & (6,7) & 14/8 & 256 & $1.14\times10^{-13}$ & $1.37\times10^{-11}$ \\
3C silicon carbide (SiC) & (6,6) & 12/6 & 64 & $1.28\times10^{-13}$ & $6.51\times10^{-11}$ \\
Magnesium oxide (MgO) & (6,7) & 14/7 & 128 & $1.14\times10^{-12}$ & $5.68\times10^{-13}$ \\
Sodium chloride (NaCl) & (6,7) & 14/7 & 128 & $3.55\times10^{-13}$ & $1.42\times10^{-14}$ \\
Caesium chloride (CsCl) & (6,7) & 14/6 & 64 & $4.55\times10^{-13}$ & $2.52\times10^{-10}$ \\
Hexagonal boron nitride (h-BN) & (6,6) & 12/8 & 256 & $4.55\times10^{-13}$ & $3.87\times10^{-12}$ \\
Aluminium nitride (AlN) & (2,8) & 16/11 & 2048 & $1.14\times10^{-12}$ & $5.83\times10^{-12}$ \\
$\alpha$-quartz (SiO$_2$) & (4,4) & 8/3 & 8 & $7.96\times10^{-13}$ & $9.09\times10^{-13}$ \\
Magnesium fluoride (MgF$_2$) & (6,8) & 16/11 & 2048 & $1.21\times10^{-11}$ & $1.30\times10^{-11}$ \\
\bottomrule
\end{tabular}
\end{center}
\endgroup

As a separate check of Eq.~\eqref{eq:excitation-screening}, all
706 packed singlet-UCCSD generators in the ten systems were assigned characters
term by term before projection.  As required by the packed singlet
construction, all monomials associated with a given amplitude have the same
character.  Of the 706 generators, 146 preserve every selected character and
560 change at least one character and project to zero; the character test and
explicit SAE projection agree in every case.

\subsection{GDF active-block assembly}

Table~\ref{tab:gdf-validation} checks the streaming contraction in
Eq.~\eqref{eq:folded-eri} for diamond and CsCl.  The active
electron-repulsion-integral (ERI) tensor is
recomputed with PySCF's vectorised \texttt{GDF.ao2mo\_7d} route and its complete
three-$k$ tensor is summed with the same single $1/N_k$ normalisation.  The two
routes use the same primitive-cell GDF data and folded active coefficients, so
this comparison directly tests momentum-quartet assembly and normalisation; it
is not an independent estimate of the GDF approximation error.  The separate
folded-HF closure jointly checks the one-body, frozen-core, two-body, and Ewald
bookkeeping on the reference determinant.

\begingroup
\renewcommand{\arraystretch}{1.12}
\small
\setlength{\tabcolsep}{7pt}
\begin{center}
\refstepcounter{table}\label{tab:gdf-validation}
\parbox{\columnwidth}{\centering
  \textbf{TABLE~\thetable:} GDF assembly checks on the $(2,2,2)$
  mesh.  ERI residuals compare the streamed active tensor with
  \texttt{GDF.ao2mo\_7d}; RMS denotes root-mean-square, and ``0'' denotes
  bitwise equality of the returned
  double-precision tensors.  $\Delta_{\mathrm{HF}}^{\mathrm{raw}}$ uses the
  physical active Hamiltonian before symmetry restoration.  Values are in
  Hartree.
}
\vspace{0.4ex}

\begin{tabular}{lccccc}
\toprule
System & CAS & $\max|\Delta(uv|wx)|$ & RMS ERI residual &
  $\max|\operatorname{Im}(uv|wx)|$ & $\Delta_{\mathrm{HF}}^{\mathrm{raw}}$ \\
\midrule
Diamond (C) & (6,6) & 0 & 0 & $1.76\times10^{-34}$ & $5.60\times10^{-11}$ \\
Caesium chloride (CsCl) & (6,7) & $3.35\times10^{-13}$ & $1.68\times10^{-14}$ &
  $1.30\times10^{-16}$ & $5.70\times10^{-10}$ \\
\bottomrule
\end{tabular}
\end{center}
\endgroup

\subsection{General-mesh check with non-self-inverse \texorpdfstring{$k$}{k} points}

The ten resource benchmarks use $(2,2,2)$ meshes, whose $k$ points are all
self-inverse modulo a reciprocal-lattice vector.  To exercise the formation of
real supercell orbitals from complex $\mathbf k,-\mathbf k$ pairs and the
explicit-mesh phase convention, we also perform an end-to-end CsCl calculation
on a $(4,2,2)$ mesh.  Eight of its 16 $k$ points are non-self-inverse.  We
retain CAS(6,7) and the $3+1+3$ block composition of the $(2,2,2)$ CsCl
benchmark: the highest occupied triplet (folded orbitals 126--128), the lowest
unoccupied singlet (orbital 129), and the next unoccupied triplet (orbitals
132--134).  This is a complete union of the
energy blocks used for symmetry adaptation.  The selection is non-contiguous
because the intervening unoccupied doublet (orbitals 130--131) is deliberately
excluded.

The active character matrix has rank eight.  Its independent physical
generators are $P_\uparrow^{-}$, $P_\downarrow^{-}$,
$T_{\mathbf A_1/2}^{+}$, $T_{\mathbf A_2/2}^{+}$,
$T_{\mathbf a_0}^{+}$, $\sigma_{100}^{+}$, $\sigma_{010}^{+}$, and
$\sigma_{001}^{+}$, where the superscripts denote target eigenvalues.  All
three coordinate half translations remain exact operations.  Along axis 0,
however, the half-supercell shift is $T_{\mathbf A_0/2}=T_{2\mathbf a_0}$ and
acts as $+I$ on this active space, so it adds no constraint.  Its shorter square
root $T_{\mathbf a_0}$ has order four on the full $(4,2,2)$ supercell but its
restricted active-space action squares to $+I$ and has non-trivial $\pm1$
characters.  The active-space procedure therefore recovers the third
translation constraint and reduces 14 to 6 qubits, just as for the $(2,2,2)$
CAS(6,7) benchmark.
The folded orbitals are orthonormal to $1.8\times10^{-12}$, and the one-electron
block obtained from the primitive-$k$ GDF components agrees with the same block
formed by the explicit Fourier phase matrix to $2.2\times10^{-14}$~Ha.

The complete 64-level target-sector spectrum agrees with the SAE spectrum to
$1.4\times10^{-12}$~Ha.  The folded-KRHF reference closure residual is
$8.0\times10^{-13}$~Ha.  The
streaming and \texttt{GDF.ao2mo\_7d} active ERI tensors
agree to $5.8\times10^{-14}$~Ha, with a maximum residual imaginary component
of $2.2\times10^{-16}$~Ha.  Of 90 packed UCCSD generators, 12 preserve the
target sector and 78 are automatically screened, with no indefinite character
and no projection mismatch.
\clearpage
\twocolumngrid
\addtolength{\textheight}{\baselineskip}
\setlength{\bibsep}{0pt}

\bibliography{references}

\end{document}